\title{The min-max edge $q$-coloring problem}
\author{Tommi Larjomaa \and Alexandru Popa}
\institute{
Department of Communications \& Networking, Aalto University School of Electrical Engineering, Aalto, Finland. \\ E-mails: \email{\{tommi.larjomaa@aalto.fi, alexandru.popa@aalto.fi\}}
}
\begin{document}
\maketitle
\begin{abstract}
In this paper we introduce and study a new problem named \emph{min-max edge $q$-coloring} which is motivated by applications in wireless mesh networks. The input of the problem consists of an undirected graph and an integer $q$. The goal is to color the edges of the graph with as many colors as possible such that: (a) any vertex is incident to at most $q$ different colors, and (b) the maximum size of a color group (i.e. set of edges identically colored) is minimized. We show the following results:
\begin{enumerate}
\item Min-max edge $q$-coloring is NP-hard, for any $q \ge 2$.
\item A polynomial time exact algorithm for min-max edge $q$-coloring on trees.
\item Exact formulas of the optimal solution for cliques and almost tight bounds for bicliques and hypergraphs. 
\item A non-trivial lower bound of the optimal solution with respect to the average degree of the graph.
\item An approximation algorithm for planar graphs.
\end{enumerate}
\end{abstract}

\section{Introduction}


Traditionally, backbone connectivity in networks of various sizes has been built using wired infrastructure. Even though the bandwidth that modern wired networking technology offers is no doubt better than that of wireless alternatives, the material and installation costs of wired networks is a significant drawback. Therefore, the concept of wireless mesh networks (WMNs) has received a lot of attention and has been researched actively during the past decade~\cite{Akyildiz2005445,Wang2008,draves2004,gupta2009,koksal2006}.

In a multi-channel WMN, each node is able to use multiple non-overlapping frequency channels. The use of many channels inside the same network can significantly improve overall performance; interference from neighboring nodes can be decreased substantially, when nodes do not need to use the same radio channel for every link. Multiple radio channels in the network means, that at least some of the nodes need to handle more than one channel at a time. In many proposed designs the multi-channel feature is achieved by packet-by-packet reconfiguration of the radio \cite{muir1998,kyasanur2005,so2004}. However, one of the drawbacks of this kind of continuous channel switching of a single radio interface is that it requires precise synchronization throughout the network.

An alternative approach would be to fit multiple radio interfaces to each node, thus allowing a more persistent channel allocation per interface. A couple of such multi-NIC (network interface card) architectures have been proposed by Raniwala et al. \cite{RaniwalaGC04,RaniwalaC05}. Their simulation and testbed experiments show a promising improvement with only two NICs per node, compared to a single-channel WMN. Another appealing feature of these architectures is that they are based on readily available, commodity IEEE 802.11 interfaces, requiring only systems software modification. 

The scenario of two or more NICs per node with fixed channels imposes some limitations to the assignment of channels on each interface. In order to set up a link between two nodes, both of them have to have at least one of their interfaces set to the same channel. On the other hand, links inside an interference range should use as many different channels as possible. Thus, the channels need to be assigned carefully in order to both keep every required link possible and maximize useful bandwidth throughout the network.

The channel assignment problem can be modelled as a type of edge coloring problem: given a graph $G$, the edges have to be colored so that there are at most $q$ different colors incident to each vertex. Here, vertices, edges and colors represent network nodes, links and channels, respectively. A coloring that satisfies this constraint, is called an edge $q$-coloring. Note, that the coloring constraint differs from the traditional coloring problems, where adjacent items are not allowed to have the same color. Also the goal is different; instead of minimizing the amount of colors, a large amount of different colors in an edge $q$-coloring is often a desired state of things.

Previously, the channel assignment was formulated as the max edge $q$-coloring problem, where the goal was to maximize the total number of edges. The drawback of this model is that in an optimal solution the same color is assigned to many edges while other colors are used only once. Thus, max edge $q$-coloring does not reflect the problem motivation and it is more realistic to try to have the color components as balanced as possible. Therefore, we newly introduce the min-max edge $q$-coloring where the goal is to minimize the maximum size of a color group.

\textbf{Related work.} The problem of finding a maximum edge $q$-coloring of a given graph has been first studied by Feng et al.~\cite{FengZQW07,FengCZ08,FengZW09}. They provide a $2$-approximation algorithm for $q=2$ and a $(1+\frac{4q-2}{3q^2-5q+2})$-approximation for $q > 2$. They show that the problem is solvable in polynomial time for trees and complete graphs in the case $q=2$, but the complexity for general graphs has been left as an open problem. Later, Adamaszek and Popa~\cite{AP10} show that the problem is APX-hard and present a $5/3$-approximation algorithm for graphs which have a perfect matching.
The maximum edge $q$-coloring is also considered in combinatorics and is a particular case of the anti-Ramsey number. For a brief description of the connection of the two problems, the reader can refer to~\cite{AP10}.

To the best of our knowledge there has been no prior research on the min-max edge $q$-coloring problem.


\textbf{Our contributions.}
In this paper we introduce and study the min-max edge $q$-coloring problem. First,  in Section~\ref{sect:mmenphard} we prove that the problem is NP-hard for any $q \ge 2$. The proof is split into two parts. We first show the NP-hardness for a more general version in which each vertex is allowed to have an independent value of $q$. In the second part we show how to introduce extra gadgets in order to force all the vertices to have the same value of $q$. 

Then, in Section~\ref{sect:polytree} we show an exact polynomial time algorithm for trees, for $q=2$. We first show that the optimal solution in a tree is at least $\Delta / 2$ and at most $\Delta$, where $\Delta$ is the maximum degree of the tree. Then, the algorithm uses binary search to find a value $c$, such that the input admits a coloring in which the largest color group is at most $c$. Given a value $c$, we select in turn each vertex as the root of the tree and try to construct a solution in a bottom up fashion. This is not straightforward as for each vertex we have to solve a knapsack instance (fortunately, these instances are solvable in polynomial time).

In Section~\ref{sect:special} we analyse the value of the optimal solution on special classes of graphs: cliques, bicliques and hypercubes. We provide the exact formulas of the optimal solutions for cliques. For bicliques we present a lower bound which is tight when both parts of the graph have an even number of vertices (and almost tight for the other cases). For a hypergraph $Q_n$ we give a lower bound which is tight for even $n$, and similarly, almost tight for odd $n$. Although these classes of graphs have a very simple structure, finding lower bounds is much more difficult than in the case of the max edge $q$-coloring problem. We need to prove several lemmas in order to understand the structure of the coloring and to prove the final theorems.

A good lower bound of the optimal solution is necessary in order to design approximation algorithms. For the min-max edge $q$-coloring problem, a trivial lower bound is half of the maximum degree. Nevertheless, in Section~\ref{sect:lb-ub} we show another lower bound in terms of the average degree of the graph. Section~\ref{sect:separator}  presents an approximation algorithm for planar graphs which achieves a sublinear approximation ratio. The algorithm uses a theorem of Lipton and Tarjan~\cite{Lipton1980} which says that a planar graph admits a small balanced separator.
Section~\ref{sect:summary} summarizes the results and briefly discusses possible future research directions.


\section{NP-hardness of Min-max Edge $q$-coloring}\label{sect:mmenphard}

In this section we prove that the min-max edge $q$-coloring problem is NP-hard for $q \geq 2$, giving little hope of finding a general exact polynomial time algorithm for it. The proof is split into two steps. First we prove NP-hardness for a more general version of the problem, defined next, where each vertex is assigned a value for $q$ individually.

\begin{problem}[General min-max edge $q$-coloring problem]\label{problem:general-mme}

The input is a graph $G = (V,E)$, and for each vertex $v_i$ there is a positive integer $q_i$. A feasible solution is a coloring of edges, such that for each vertex $v_i$, there are at most $q_i$ different colors incident to it. The goal is to find a coloring $\sigma$ such that the size of the largest color group, $\displaystyle\max_c |\{ e \in E | \sigma(e) = c \}|$, is minimized.
\end{problem}

The reduction is made from monotone one-in-three SAT (Definition \ref{problem:one-in-three-sat}), which is known to be NP-complete \cite{schaefer1978}. By modifying this reduction slightly we can prove NP-hardness for the min-max edge $q$-coloring problem with a constant value of $q$.

\begin{definition}[monotone one-in-three SAT problem]\label{problem:one-in-three-sat}

The input is a Boolean 3CNF-formula $\phi$, where each literal is simply a variable; there is no negation. Determine whether a truth assignment for the variables exists, such that for each clause, there is exactly one literal that is true, while the other two literals are false.
\end{definition}

Now we state and prove NP-hardness for the general edge $q$-coloring problem.

\begin{theorem}\label{theorem:general-mme-np-hard}
Problem \ref{problem:general-mme} is NP-hard.
\end{theorem}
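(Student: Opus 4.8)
The plan is a polynomial-time reduction from monotone one-in-three SAT (Definition~\ref{problem:one-in-three-sat}). Given a monotone 3CNF formula $\phi$ with variables $x_1,\dots,x_n$ and clauses $C_1,\dots,C_m$, I would construct a graph $G$, a bound $q_i$ for every vertex, and an integer threshold $t$ of size polynomial in $n+m$, so that $\phi$ has a one-in-three satisfying assignment if and only if $G$ admits a feasible coloring in which every color group has size at most $t$. The point of working with the general version first is exactly that the per-vertex choice of $q_i$ gives enough control to build clean gadgets: vertices with $q_i=1$ force \emph{all} their incident edges to share one color and will be used to create and synchronise truth values, while vertices with $q_i=2$ (on which at most two colors may meet) will be used to implement the one-in-three check; $t$ is chosen just large enough for the colorings the construction intends.

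\emph{Variable gadget.} For each variable $x_i$ I attach a gadget with one distinguished output edge per occurrence of $x_i$ in a clause, arranged so that these output edges are all forced to receive a common color $\gamma_i$ --- for instance by routing them into a common $q=1$ vertex, or, if one wants to keep all degrees and hence all forced color groups within $t$, through a small fan-out tree of low-$q$ vertices that propagates a single color to the leaves. The intended semantics is that $x_i$ is \emph{true} exactly when $\gamma_i$ equals a designated globally shared color $T$, and \emph{false} when $\gamma_i$ is a color private to $x_i$; with this convention every Boolean assignment induces a coloring of the variable gadgets, and conversely a feasible coloring, read off at the output edges, induces a Boolean assignment. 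The easy half of the correctness proof then amounts to checking that the group sizes produced this way (a color private to $x_i$ appears on at most $\deg$-many output edges, the color $T$ on at most $m$ of them, etc.) stay within $t$.

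\emph{Clause gadget and the obstacle.} For each clause $C_j=(x_a,x_b,x_c)$ I add a clause vertex $y_j$ with $q_{y_j}=2$ joined to the three corresponding output edges, together with a few auxiliary pendant edges attached to small forcing gadgets. The constraint $q_{y_j}=2$ says that at most two distinct colors appear among $y_j$'s edges; since two distinct false literals of $C_j$ contribute two distinct private colors (and three would contribute three), this already forbids ``two or more false'' (equivalently enforces one side of the one-in-three condition at $C_j$), while the auxiliary edges, forced to an appropriate color and counted against $t$, are used to rule out the remaining bad case, so that a feasible coloring with all groups at most $t$ exists in the gadget of $C_j$ precisely when exactly one of $x_a,x_b,x_c$ is true. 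Assembling the gadgets, the two directions are: (i) from a one-in-three assignment, color variable outputs by $T$ or by private colors, color the clause and auxiliary edges in the forced way, and verify every group has size at most $t$; (ii) from any feasible coloring with all groups at most $t$, argue that the $q_{y_j}=2$ constraint together with the auxiliary structure forces, at every $y_j$, exactly one true input, hence $\phi$ is satisfied. I expect the main difficulty to lie entirely in the clause gadget and in the choice of $t$: the $q=2$ constraint on a degree-$3$ vertex naturally expresses a one-sided ``at least two edges coincide'' condition, so pinning down the \emph{exact} one-in-three (excluding both the all-false case and the $\ge 2$-true case) needs carefully designed auxiliary gadgets, and one must simultaneously make $t$ large enough for the honest colorings yet small enough that no coloring of an unsatisfiable instance --- in particular none that cheats by introducing extra colors --- can stay within the bound; verifying this last point is the delicate part of the argument.
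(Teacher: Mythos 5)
Your overall strategy --- a reduction from monotone one-in-three SAT using $q=1$ vertices to force colors, degree-$3$ clause vertices with $q=2$, and a global size threshold --- is the same as the paper's. But the truth-value convention you chose breaks the clause gadget. You set \emph{true} $=$ the globally shared color $T$ and \emph{false} $=$ a color private to $x_i$. A one-in-three satisfying assignment makes exactly one literal of $C_j$ true and two false, so the clause vertex $y_j$ then sees $T$ plus two \emph{distinct} private colors, i.e.\ three colors in total --- violating $q_{y_j}=2$. So the honest coloring arising from a satisfying assignment is infeasible, and the forward direction of your reduction fails. Your remark that forbidding ``two or more false'' enforces ``one side of the one-in-three condition'' has the polarity backwards: it enforces \emph{at most one false}, i.e.\ at least two true literals per clause, which is incompatible with exactly one true. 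The paper uses the opposite convention: all false variables share one global color $F$ (emanating from a single $q=1$ hub vertex $f$), and each true variable gets its own private color. Then a clause with exactly one true literal sees two colors and is feasible; two true literals with distinct private colors give three colors at the clause vertex, and the degenerate escape in which two true variables share a private color is killed by the size bound rather than by the $q$ constraint.

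The second gap is the counting mechanism, which you defer to unspecified ``auxiliary pendant edges'' but which is the heart of the construction. The paper pads each $a_i$ (the $q=1$ vertex carrying $x_i$'s clause occurrences) to degree exactly $2m_i$ and attaches a counterweight vertex $b_i$ (also $q=1$) of degree $L-2m_i$, with $L=4m+n$, linked to $a_i$ through a $q=2$ vertex $v_i$ that also touches the hub $f$. This makes every true variable's color group have size exactly $2m_i+(L-2m_i)=L$, makes the $F$-group have size exactly $4m+n=L$ precisely when $2m$ literals are false, pushes the $F$-group above $L$ whenever some clause is all-false (more than $2m$ false literals), and, because $\deg(b_i)>L/2$, makes any two variables that try to share a private color exceed $L$. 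Without an explicit mechanism of this kind, your statement that $t$ can be chosen ``large enough for the honest colorings yet small enough that no coloring of an unsatisfiable instance can stay within the bound'' is precisely the claim that needs proof, and your sketch does not supply it.
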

\begin{proof}
We use a reduction from monotone one-in-three SAT (Definition \ref{problem:one-in-three-sat}), which goes as follows. There are $m$ clauses and $n$ variables in the formula $\phi$. For each clause, there is a single vertex $c_j$ with $q=2$ (we use this notation as a shorthand for ``at most 2 different colors can be incident to $c_j$''). For each variable $x_i$, there are three vertices: $a_i$, $b_i$ and $v_i$ having $q=1$, $q=1$ and $q=2$, respectively. Each vertex $v_i$ is adjacent to vertices $a_i$ and $b_i$. If a variable is present in a clause, the corresponding variable vertex $a_i$ is adjacent to the clause vertex $c_j$. For each $a_i$, there are additional leaves adjacent to it, so that $\deg(a_i) = 2m_i$, where $m_i$ is the number clauses the variable is present in.\footnote{We can safely assume that each variable has at most one literal in any clause. Otherwise a variable set to true could make a clause unsatisfied regardless of other variables, essentially making the formula not 3CNF from the point of view of the satisfiability problem.} Moreover, for each $b_i$, there are additional leaves so that $\deg(b_i) = L - 2m_i$, where $L = 4m + n$.  Finally, there is a vertex $f$ with $q=1$, that is adjacent to each $v_i$. The resulting graph is of polynomial size in $m$. Figure \ref{img:1-in-3sat-minmax-reduction} illustrates the idea of the reduction by showing the full gadget of a single variable.

\begin{figure}[htb]
\centering \includegraphics[scale=1]{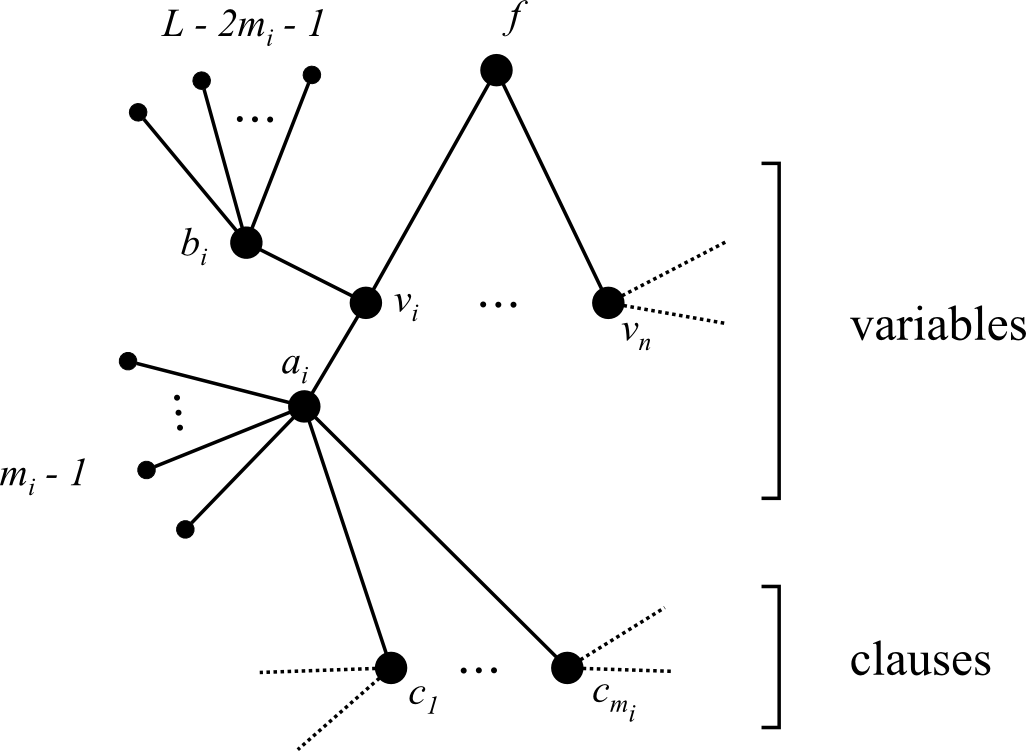}
\caption{The reduction from monotone 1-in-3 SAT to general min-max edge $q$-coloring for a single variable. The dotted edges lead to similar variable gadgets.}\label{img:1-in-3sat-minmax-reduction}
\end{figure}

Next we show that if $\phi$ is satisfiable, there is a feasible coloring for the reduction, whose largest color group is $L$. For each variable $x_i$ that is false in the satisfying truth assignment, color the edges incident to $a_i$ with the color $F$, which is the color incident to the vertex $f$. There are two edges incident to some $a$-vertex per each literal, and $2m$ false literals, so there are in total $4m + n = L$ edges colored with $F$. Since $v_i$ is incident to only one color at this point, we give a distinct color for the edges of $b_i$, of which there are less than $L$.

For each true variable $x_i$ we choose a distinct color and use it to color edges incident to both $a_i$ and $b_i$. These color groups have thus $L - 2m_i + 2m_i = L$ edges. Since the truth assignment is satisfying, there is one color representing a true variable and the color $F$ representing false variables incident to each clause vertex, which makes the coloring feasible.

Finally, we show that if the formula is not satisfiable, the optimum of the reduction is more than $L$ (in other words, if the optimum of the reduction is less or equal to $L$, the formula is satisfiable). In a feasible coloring of a reduction from an unsatisfiable formula, there are two possibilities. Either there are clauses in which two or more variables and their $a$-vertices have a color different from $F$, or there are clauses in which all variables are using color $F$ (or both).

In the first case, two variable vertices $a_i$ and $a_j$ necessarily share a color, which we denote by $C$. Consequently, the vertices $v_i$ and $v_j$ are both saturated with colors $F$ and $C$. Note, that for any variable $x_k$, $\deg(b_k) \geq L - 2m > L/2$. Thus, if the vertices $b_i$ and $b_j$ are assigned the same color, the limit $L$ is immediately exceeded. On the other hand, if one of those vertices, say, $b_i$ takes the color $F$, and $b_j$ takes the color $C$, there are already $L$ edges colored with $C$ due to the variable $x_j$ plus the edges incident to $a_i$.

In the second case we can assume that the clauses that have not only false literals in them, have exactly one true literal, since the other case was already discussed. Now, there are more than $2m$ false literals, and, as observed before, there are two edges per literal incident to the $a$-vertices. Thus, there must be more than $4m + n = L$ edges colored with $F$. \qed
\end{proof}

As we go on to prove NP-hardness for min-max edge $q$-coloring, where each vertex has the same value for $q$, we use a slightly modified version of the previous reduction. The idea is to mimic vertices with $q=1$ or $q=2$. This is done by saturating vertices with an appropriate number of different colors that already have $L$ edges. We proceed with the theorem and proof.

\begin{theorem}
The min-max edge $q$-coloring problem is NP-hard for $q \geq 2$.
\end{theorem}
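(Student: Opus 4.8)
The plan is to reduce from the general min-max edge $q$-coloring problem (Theorem~\ref{theorem:general-mme-np-hard}) by building, on top of the instance constructed there, a collection of \emph{saturation gadgets} that force every vertex to effectively have the same color budget $q$, while still simulating the per-vertex budgets $q_i \in \{1,2\}$ used in that reduction. The key observation is that if a vertex $v$ is made incident to $q - q_v$ distinct colors, each of which already appears on exactly $L$ edges elsewhere and cannot be extended, then in any feasible solution of value $\le L$ those colors are ``frozen'': $v$ has only $q_v$ remaining color slots available for its other incident edges, exactly mimicking a budget of $q_v$. So the first step is to make this precise: I would describe a gadget $S$ attached to a vertex $v$ consisting of $q - q_v$ color classes, each forced to size exactly $L$, whose edges touch $v$ but whose remaining endpoints are arranged (e.g., as disjoint stars/paths among fresh degree-one or low-degree vertices) so that no feasible coloring of value $\le L$ can reuse those colors anywhere that matters, nor merge two of them.

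Second, I would verify the forward direction: given a satisfying assignment of the one-in-three SAT formula $\phi$, take the coloring of value $L$ produced in the proof of Theorem~\ref{theorem:general-mme-np-hard}, then color each saturation gadget with its dedicated fresh colors, each used on exactly $L$ edges. One must check that the fresh vertices introduced by the gadgets themselves see at most $q$ colors — this is easy if each such vertex is incident to only one gadget color plus possibly one ``filler'' color, so choosing the internal structure of $S$ with this in mind is the design constraint. The largest color group remains $L$.

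Third, the reverse direction: if the augmented instance has a solution of value $\le L$, I argue that the restriction to the original gadget is a feasible general $q$-coloring of value $\le L$ with each $v_i$, $c_j$, $a_i$, $b_i$, $f$ respecting its intended budget $q_i$, because the $q - q_i$ gadget colors at that vertex are forced distinct and frozen (they already carry $L$ edges and cannot absorb more), leaving exactly $q_i$ usable slots. Then Theorem~\ref{theorem:general-mme-np-hard} gives that $\phi$ is satisfiable. For the range $q \ge 3$ the same construction works verbatim since we only ever need to \emph{subtract} down to budgets $1$ and $2$, which is possible as long as $q \ge 2$; I would remark that for $q = 2$ no saturation is needed at the clause vertices and $v_i$ vertices, only at $a_i$, $b_i$, $f$, so the construction degenerates gracefully.

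The main obstacle I anticipate is the careful design of the saturation gadget so that (a) every one of its $q - q_v$ colors is genuinely forced to size exactly $L$ and cannot be split, merged, or reused to relieve pressure elsewhere, and (b) the auxiliary vertices inside the gadget do not themselves violate the uniform budget $q$. This is a purely combinatorial bookkeeping issue — essentially attaching enough pendant edges and choosing star sizes so the color-count and edge-count constraints are simultaneously tight — but it is the part where an otherwise-clean reduction can break, so it deserves the bulk of the write-up. Everything else (polynomial size, the two implications) is then routine given Theorem~\ref{theorem:general-mme-np-hard}.
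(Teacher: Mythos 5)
Your proposal is correct and follows essentially the same route as the paper: augment the reduction of Theorem~\ref{theorem:general-mme-np-hard} with saturation gadgets that pin $q-q_i$ frozen colors of size exactly $L$ to each vertex. The concrete gadget you anticipate is exactly the paper's: a star with $qL$ leaves, whose center is pigeonholed into $q$ color classes of size exactly $L$; contracting one leaf from each of $q-1$ (resp.\ $q-2$) such stars onto a vertex mimics $q_i=1$ (resp.\ $q_i=2$), and reusing the same $q-1$ stars for all $M=m+3n+1$ vertices keeps the construction of size $O(m+n)$.
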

\begin{proof}
We begin by showing how to force a vertex with any value of $q$ to allow only one or two new colors for its additional edges, given the upper bound $L$ for color group size. Observe that the optimum for a $(qL+1)$-star, namely a star with $qL$ leaves, is exactly $L$. We take $q-1$ such stars, pick one leaf from each star and contract them as one vertex. In an optimal coloring of the acquired gadget, the contracted vertex $v$ is incident to $q-1$ different colors of size $L$. As we add edges to $v$, they can be colored with only one color in order to keep color group sizes below $L$. If we want a vertex that allows two colors, we pick $q-2$ leaves from different $qL$-stars (we can use the same stars as before, since there are plenty of leaves left) and contract them as one.

Using such gadgets that mimic vertices with $q=1$ and $q=2$, we straightforwardly construct a reduction equivalent to the one used in the proof of Theorem \ref{theorem:general-mme-np-hard}. Now it remains to show that the number of additional vertices and edges in the new reduction is polynomially bounded in the size of the formula.

We show that we need only ($q-1$) stars to be able to mimic enough vertices. In the original reduction, there is one vertex per clause, three vertices per variable\footnote{We do not need to take into account the leaves of the variable vertices; a leaf allows only one color incident to it, no matter what value $q$ has.} and the vertex $f$. In total we have $M = m + 3n + 1$ vertices that need to be mimicked. We need at most $q-1$ leaves to mimic one vertex, so having $qL \geq 2L \geq M$ will suffice. Assume the opposite, which yields
\begin{displaymath}
M > 2L \Leftrightarrow m + 3n + 1 > 8m + 2n \Leftrightarrow n > 7m - 1.
\end{displaymath}
This contradicts with the fact that there can be at most $3m$ variables in a 3CNF-formula, that is, $n \leq 3m$. So, the number of additional edges needed for the modified reduction is $(q-1)qL = O(m+n)$, since $q$ is constant. \qed
\end{proof}

\section{Exact Polynomial Time Algorithm for Trees}\label{sect:polytree}

In this section we present an exact polynomial time algorithm for solving the min-max edge 2-coloring problem on trees. First of all we give the following bound of the optimal solution.

\begin{lemma}\label{lemma:minmax-opt-trees}
For an instance of the min-max edge 2-coloring problem, where the graph is a tree $T$, OPT $\in \left[ \frac{\Delta}{2}, \Delta-1 \right]$, where $\Delta$ is the maximum degree of $T$.
\end{lemma}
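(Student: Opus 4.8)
The plan is to establish the two bounds separately. For the lower bound $\mathrm{OPT} \ge \Delta/2$, I would pick a vertex $v$ of maximum degree $\Delta$ and observe that in any feasible edge $2$-coloring, the $\Delta$ edges incident to $v$ use at most $2$ colors; hence by pigeonhole one color class contains at least $\lceil \Delta/2 \rceil$ of these edges, so the largest color group has size at least $\Delta/2$. This direction is immediate and requires only the pigeonhole principle applied locally at a maximum-degree vertex.

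For the upper bound $\mathrm{OPT} \le \Delta - 1$, the plan is to exhibit an explicit feasible coloring in which every color class has size at most $\Delta - 1$. The natural idea is to root the tree $T$ at an arbitrary vertex $r$ and color edges in a way that propagates colors down from parents. Concretely, for a non-root vertex $u$ with parent edge already colored, say with color $c$, I would color the edges from $u$ to its children greedily: reuse color $c$ for as many child edges as needed, and introduce at most one fresh color for the remainder, so that $u$ sees at most $2$ colors. One must be careful about how many edges land in each color group; the key observation is that a color group, viewed as a subgraph of the tree, will be a union of paths/small subtrees, and by controlling how many edges at each vertex get a given color one can cap each group at $\Delta - 1$. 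An alternative and perhaps cleaner route is to take a proper edge coloring of $T$ with $\Delta$ colors (which exists by Vizing/König for bipartite graphs, and trees are bipartite), and then merge colors in groups so that each vertex sees at most $2$ of the merged colors while keeping each merged class small; since each original proper color class is a matching, merging a bounded number of matchings keeps the size controlled.

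The main obstacle I anticipate is the upper bound construction: one needs a coloring scheme that simultaneously respects the local constraint (at most $2$ colors per vertex) and the global size bound ($\Delta - 1$ per color class), and naively propagating a parent's color downward can cause one color class to blow up along a long path. The fix is to be judicious about \emph{which} color a vertex passes to its children — for instance, only allowing a color to be "inherited" a bounded number of times before being retired, or balancing the two colors available at each internal vertex. I would also need to handle the root specially (the root has no parent edge, so it may freely introduce up to $2$ colors for its $\le \Delta$ incident edges, splitting them roughly evenly). Checking that no color class exceeds $\Delta - 1$ — rather than just $\Delta$ — is the delicate point and is presumably where the tree structure (acyclicity, so color classes are forests) is used essentially.
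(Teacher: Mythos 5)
Your lower bound is complete and identical to the paper's: pigeonhole at a maximum-degree vertex gives one color class of size at least $\lceil \Delta/2 \rceil$ among the edges incident to that vertex.

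The upper bound, however, is left as a plan with the crucial step unresolved, and you say so yourself ("the delicate point"). You propose to let each vertex $u$ reuse its parent edge's color $c$ on some of its child edges and introduce one fresh color for the rest, and you correctly observe that this can make the class of $c$ blow up along a path; but the fixes you gesture at (retiring a color after a bounded number of inheritances, balancing the two colors at each vertex) are not pinned down, and neither is shown to cap every class at $\Delta-1$. The missing idea is simpler than what you are reaching for: never reuse the parent's color on any down-edge. Root the tree at $v_r$, split its $\le \Delta$ incident edges evenly between two colors (each class gets at most $\lceil \Delta/2 \rceil \le \Delta-1$ edges when $\Delta \ge 2$), and for every other vertex $v$ color \emph{all} of its $\deg(v)-1$ down-edges with a single brand-new color. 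Each non-root vertex then sees exactly two colors (its parent's and its own fresh one), each child sees only one color before it is processed so it may in turn introduce a fresh color, and every color class is a star with at most $\Delta-1$ edges --- no propagation occurs at all, so there is nothing to control. This is exactly the paper's construction. Your alternative route via a proper edge coloring of $T$ with $\Delta$ matchings and then merging classes does not obviously work either: to keep every vertex incident to at most $2$ merged colors you are essentially forced into very few merged classes, whose sizes can then far exceed $\Delta-1$, and you give no merging rule that avoids this.
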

\begin{proof}
The lower bound follows from the fact that there is a vertex with $\Delta$ edges incident to it, and only two distinct colors can be assigned to these edges. The upper bound can always be achieved with the following coloring. Choose an arbitrary vertex $v_r$ as the root vertex, and color its edges evenly with two colors. For each child $v$ of $v_r$, there are $deg(v)-1$ uncolored edges that can be colored with a new color, since $v$ had only one edge colored previously. The same is repeated iteratively for each child vertex of a visited vertex. No more than $\Delta-1$ edges are colored with any color. \qed
\end{proof}

The polynomial time algorithm for trees is defined below (Algorithm \ref{alg:polytree}). The idea of the algorithm is to try to color the tree with different candidate values for optimum from the interval $\left[ \frac{\Delta}{2}, \Delta-1 \right]$, until candidates $c$ and $c-1$ are found so that $c$ leads to a feasible coloring whereas $c-1$ does not. This is repeated for each vertex as the root vertex, and the smallest successful value of $c$ is the optimum. By applying the principle of binary search we only need to test $O(\log \Delta)$ different candidates per root. Now we prove that the output is in fact optimal.

\begin{theorem}
Given a tree as input, the output of Algorithm \ref{alg:polytree} is a feasible and optimal solution to the min-max edge 2-coloring problem.
\end{theorem}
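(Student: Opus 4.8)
The plan is to establish two things about Algorithm~\ref{alg:polytree}: \emph{feasibility} (the coloring it outputs satisfies the edge $2$-coloring constraint) and \emph{optimality} (the reported value equals $\mathrm{OPT}$). Feasibility should follow directly from the construction of the bottom-up coloring subroutine: whenever the algorithm colors the edges of a vertex $v$ to its children, it only assigns colors in such a way that $v$ sees at most two colors total (one possibly inherited from the edge to its parent), and each child inherits exactly one color along that edge; so by induction on the depth of the rooted tree every vertex ends up incident to at most two colors, and by the design of the knapsack step no color class exceeds the tested threshold $c$. I would phrase this as a short induction and defer the precise accounting to the description of the subroutine.

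For optimality, the key is a correctness claim for the threshold test: \emph{for a fixed root $v_r$ and candidate value $c$, the subroutine succeeds if and only if the tree rooted at $v_r$ admits a feasible min-max edge $2$-coloring with all color classes of size $\le c$.} I would prove this by induction from the leaves upward. The ``only if'' direction is immediate since a successful run produces such a coloring. For ``if'', suppose a good coloring with bound $c$ exists; at each vertex $v$, given the color arriving from its parent edge, the edges from $v$ to its children must be partitioned into at most two color classes, and the sizes that $v$ can ``push down'' to each child subtree are exactly constrained by how many edges of each of the (at most two) colors are used at $v$ — this is precisely the knapsack-type feasibility question the subroutine solves exactly. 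One shows that any globally good coloring can be rearranged, subtree by subtree, to match a choice the subroutine considers, using the fact that distinct subtrees can always be recolored with fresh colors without affecting feasibility elsewhere (each internal color is "local" to the subtree below the edge on which it is introduced). Then, since the algorithm performs binary search over $c$ in the interval $\left[\frac{\Delta}{2},\Delta-1\right]$ — which by Lemma~\ref{lemma:minmax-opt-trees} is guaranteed to contain $\mathrm{OPT}$ — and takes the minimum successful $c$ over all choices of root, the returned value is exactly $\mathrm{OPT}$, and the accompanying coloring is feasible with that value.

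The main obstacle I expect is making the inductive ``if'' direction fully rigorous: one must argue that fixing, for each vertex, only the \emph{multiset of color-class sizes} handed down to each child (rather than the actual colors) loses no generality, and that the knapsack instance the subroutine solves at $v$ genuinely captures all and only the feasible ways to split $\deg(v)-1$ child-edges into at most two groups consistent with the inherited color. This requires a careful statement of what information is propagated up the recursion — essentially, for each child subtree and each possible ``load'' transmitted into it, whether a good coloring of that subtree exists — and a verification that combining these via the knapsack step is both sound and complete. Once this invariant is stated cleanly, the rest (binary search validity via Lemma~\ref{lemma:minmax-opt-trees}, trying every root, polynomial running time) is routine bookkeeping.
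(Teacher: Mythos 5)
Your proposal follows essentially the same route as the paper's proof: feasibility by construction of the bottom-up subroutine, and optimality via an inductive exchange argument showing that the knapsack step's residual minimization loses no generality, combined with Lemma~\ref{lemma:minmax-opt-trees} to justify the binary-search range and the iteration over all roots. One remark: your key invariant --- that for a \emph{fixed} root the subroutine succeeds if and only if a coloring with all classes of size at most $c$ exists --- is stronger than what the paper establishes (the paper only argues completeness over the set of all roots and explicitly leaves root-independence as an open question in a footnote), so carrying out your inductive ``if'' direction rigorously would in fact settle that question.
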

\begin{proof}
From Lemma \ref{lemma:minmax-opt-trees} we know that the optimum is within the search range of the algorithm, so if it is able to identify a feasible maximum color group size candidate, it finds the optimum. To see that this is the case, we analyse the applied coloring strategy carefully.

As in the algorithm, we choose one vertex of the input tree $T$ at a time as root. Consider a maximum candidate $c$ and a non-leaf vertex $v$ that has only leaves as children. In order to keep the color group sizes below $c$, it is best to color as many leaf edges of $v$ as possible with one color. Anything less would be more detrimental for the task of satisfying the limit $c$, since the parent of $v$, namely $v_p$, needs to use one of its two colors for the residual edges of $v$. In other words, when the edge between $v$ and $v_p$ is assigned a color, the color necessarily propagates to the child edges of $v$ that are yet without a color. Note also that the parent edge of a vertex is necessarily one of its residual edges. Thus, for any non-root vertex the residual number (introduced in step 6 of the algorithm) is at least one, whereas for the root it can be zero.

In addition to $v$, its parent $v_p$ possibly has other children that similarly to $v$ have a certain amount of uncolored residual edges. This is where the knapsack problem comes in. One color needs to be assigned to a set of children of $v_p$ so that the sum of the residual numbers of these children is maximized, but does not exceed $c$. This in turn minimizes the residual number of $v_p$. We then repeat this minimization task for each vertex. This needs to be done in a bottom up order since the residual number of a vertex is dictated by those of its children. If at any point the residual number of a vertex turns out larger than $c$, we know that with the currently chosen root vertex, the coloring attempt fails to satisfy the candidate limit. If all residual numbers are less than $c$, the coloring is successful. Figure \ref{img:tree-fail} illustrates a failed coloring attempt.


\begin{algorithm}[H]
\caption{Tree 2-coloring algorithm}
\label{alg:polytree}
\textbf{Input:} A tree graph $T$
\begin{tabbing}
1. $m \longleftarrow \Delta - 1$ \\
2. \= For each vertex $v^r$ \\
   \> 3. Label each vertex of $T$ with its distance from the root vertex (via e.g. BFS) \\
   \> 4. $l \longleftarrow \left\lceil \frac{\Delta}{2} \right\rceil, u \longleftarrow \Delta - 1$ \\
   \> 5. \= Repeat \\
   \>    \> 6. \= Assign for each non-root vertex $v$ a residual number $v_l \longleftarrow 1$, and for \\
   \>    \>    \> the root $v^r_l \longleftarrow 0$ \\
   \>    \> 7. $c \longleftarrow \left\lfloor \frac{1}{2}(l+u) \right\rfloor$ \\
   \>    \> 8. \= For each non-leaf vertex in descending order of distance from root \\
   \>    \>    \> 9. \= Solve the following knapsack instance: \\
   \>    \>    \>    \> Denote the children of $v$ by $v^i$. Size of the knapsack is $c$, and the item \\
   \>    \>    \>    \> sizes are the residual numbers $v_l^i$ of the children. \\
   \>    \>    \> 10. Store the set of indices of the children in the knapsack solution to $S$ \\
   \>    \>    \> 11. If $\displaystyle\sum_{i} v_l^i - \displaystyle\sum_{j \in S} v_l^j + v_l > c$: $l \longleftarrow c+1$ and go to step 16 \\
   \>    \>    \> 12. \= Color the uncolored edges incident to $v^i, i \in S,$ and all their successors \\
   \>    \>    \>    \> with a new color \\
   \>    \>    \> 13. $v_l \longleftarrow v_l + \displaystyle\sum_{i} v_l^i - \displaystyle\sum_{j \in S} v_l^j$ \\
   \>    \> 14. Color the remaining uncolored edges connected to the root with one color \\
   \>    \> 15. Store the current coloring to $U$, and set $u \longleftarrow c$ \\
   \>    \> 16. If $l=u$, revert to the coloring $U$, jump out of the loop to step 17 \\
   \> 17. if $u < m$: $m \longleftarrow u$ and $M \longleftarrow U$ \\
18. Revert to the coloring $M$ \\
\end{tabbing}
\textbf{Output:} $m$
\end{algorithm}

Essentially, one run through the loop starting at step 8 minimizes the residual number of the root vertex with respect to an optimum candidate $c$. If a residual number exceeds $c$, the combination of the root vertex and the optimum candidate does not lead to a feasible coloring. Changing the root vertex, however, changes the parental relationships between the vertices, and consequently the residual numbers, even if the optimum candidate was the same. This is why we need to iterate the minimization process with all combinations of root vertices and optimum candidates to be sure\footnote{It might be that a failure to color a tree with any root $v^r$ and a fixed optimum candidate $c$ implies a similar failure for all possible roots, but this remains an open question.}. Since there are merely $O(n\Delta)$ of such combinations, this does not compromise the algorithm running in polynomial time.

\begin{figure}[htb]
\centering \includegraphics[scale=1]{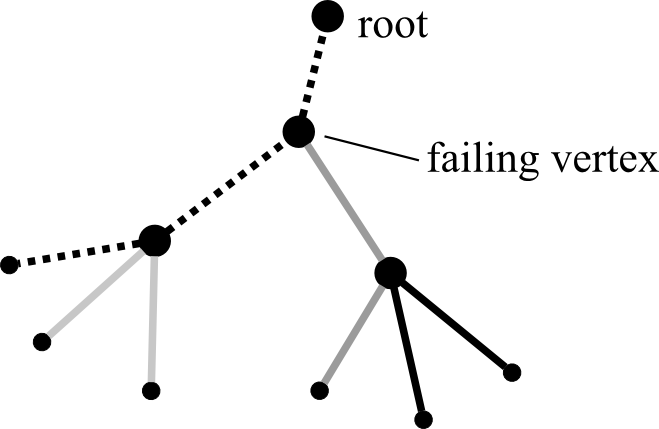}
\caption{A failed attempt to color a tree with optimum candidate 2. The dashed edges are without color.}\label{img:tree-fail}
\end{figure}

As a final note, since the knapsack problem is known to be NP-hard, it might give reason to believe that step 9 of the algorithm does not run in polynomial time in general. Fortunately, it is also well known that knapsack instances are solvable in $O(nW)$ time, where $n$ is the number of items and $W$ is the size of the knapsack. Since at any vertex there are at most $\Delta$ items (children) and the knapsack size is also at most $\Delta$, any knapsack instance encountered in step 9 is solvable in $O(\Delta^2)$ time. \qed

\end{proof}

\section{Special Cases}\label{sect:special}

In this section we present formulas for the optimal solution of the min-max edge 2-coloring problem in the case of three simple graph types. These special cases are cliques, bicliques and hypercubes.

\subsection{Clique}\label{ssect:clique}

Here we show that an optimal min-max edge 2-coloring of an $n$-clique $K_n$ achieves OPT$(K_n) \geq \left\lceil \frac{1}{3}|E(K_n)| \right\rceil$. The proof is split in parts. Also, we show that the bound is tight in most cases and present exact formulas for the optimum in all cases. Before we begin, we define a term used frequently later on.


\begin{definition}[Color subgraph]\label{def:color-subgraph}
For a given feasible edge $q$-coloring of a graph $G$ and a color $c$, a \textit{color subgraph} $G^{c}$ is an edge induced subgraph of $G$, induced by all edges with the color $c$.
\end{definition}

The first observation concerns a color that is not incident to every vertex of the clique. Such a color can share vertices with only a limited number of other colors. This and the forthcoming lemmas help narrow down the different ways of how a clique can be colored. 

\begin{lemma}\label{lemma:clique1}
In a feasible edge 2-coloring of a clique $K_n$ and for any color $c$, a color subgraph $K_n^{c}$ cannot share vertices with more than two other color subgraphs, if $V(K_n^{c}) \subset V(K_n)$.
\end{lemma}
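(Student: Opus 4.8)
The plan is to exploit the single vertex that lies outside the support of color $c$. Write $S = V(K_n^{c})$ for the set of vertices incident to at least one edge of color $c$; the hypothesis $V(K_n^{c}) \subset V(K_n)$ guarantees a vertex $w \in V(K_n) \setminus S$. Since $K_n$ is complete, $w$ is joined to every vertex of $S$, and because $w$ is not incident to color $c$, each edge $wv$ with $v \in S$ has a color different from $c$.

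Next I would pin down, for each $v \in S$, the ``other'' color at $v$. Every $v \in S$ is incident to color $c$ and also to the edge $wv$, whose color is not $c$; since the coloring is a feasible edge $2$-coloring, $v$ is incident to exactly two colors, namely $c$ and the color of $wv$, which I will call $\phi(v)$. In particular no vertex of $S$ is incident to color $c$ only, as that would force $wv$ to have color $c$, contradicting $w \notin S$.

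The heart of the argument is then a counting observation at $w$. Suppose a color $c' \neq c$ shares a vertex with $c$, i.e.\ there is $v \in V(K_n^{c'}) \cap S$. Then $c'$ is a color incident to $v$ other than $c$, so $c' = \phi(v)$, i.e.\ $c'$ is the color of the edge $wv$. Hence every color that shares a vertex with $c$ appears on some edge incident to $w$, so it is one of the at most two colors incident to $w$. Therefore there are at most two such colors, which is exactly the claim.

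I do not expect any real obstacle here; the only points needing care are the degenerate cases (if $c$ is not used at all the statement is vacuous), and the fact, established in the second step, that no vertex of $S$ sees only $c$, which makes the characterization ``colors sharing a vertex with $c$'' $= \{\phi(v) : v \in S\} \subseteq$ ``colors incident to $w$'' complete. The feasibility constraint $q=2$ enters only through ``$w$ has at most two colors''.
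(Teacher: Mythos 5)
Your proof is correct and uses essentially the same mechanism as the paper's: the edges from $V(K_n^c)$ to the outside must carry, for each $v\in V(K_n^c)$, the unique second color of $v$, so every color sharing a vertex with $c$ is forced onto a vertex outside $V(K_n^c)$, which can host at most two colors. The paper phrases this as a contradiction over all outside vertices while you localize to a single witness $w$ and argue directly, but the idea is identical.
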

\begin{proof}
Assume the opposite. In a feasible coloring of $K_n$, let $K_n^{c}$ be a color subgraph that shares vertices with $k \geq 3$ other color subgraphs $K_n^{c_1}, \ldots, K_n^{c_k}$, and $V(K_n^{c}) \subset V(K_n)$. Now, a vertex $v$ in $V(K_n^c)$ is incident to two colors: $c$ and $c_i$, the latter being assigned to the edges going from $v$ to vertices not in $V(K_n^c)$.  Formally, $V(K_n) \setminus V(K_n^{c}) \subset V(K_n^{c_i})$ for each $i = 1, \ldots, k$. Thus, we have a set of vertices $V(K_n^{c_i})$ that is incident to $k$ colors, which makes the coloring not feasible, a contradiction. \qed
\end{proof}

Next, we look at a more specific case of the situation described in the above lemma. When a color is not incident to all vertices and shares vertices with exactly two other colors, there are exactly three colors, all of which are necessarily incident to the other two. This coloring strategy actually turns out to be the best in the end.

\begin{lemma}\label{lemma:clique2}
Given a feasible edge 2-coloring of $K_n$, for which there is a color subgraph $K_n^{c}$ that shares vertices with exactly two other color subgraphs, and $V(K_n^{c}) \subset V(K_n)$, the coloring has exactly three colors, whose color subgraphs have these same properties.
\end{lemma}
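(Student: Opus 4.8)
The plan is to organise the argument around the way color $c$ splits the vertex set. Write $A = V(K_n^{c})$ and $B = V(K_n)\setminus A$; by hypothesis $A \subsetneq V(K_n)$, so $B \neq \emptyset$, and since $c$ is actually used, $|A| \ge 2$. For any $v \in A$, every edge from $v$ to a vertex of $B$ has a color other than $c$ (its $B$-endpoint is incident to no $c$-edge), and because $q=2$ and $v$ already meets $c$, all such edges carry one common color $c_v$. That color is incident to $v \in V(K_n^{c})$, hence it lies in the set of colors that share a vertex with $c$, which by hypothesis has exactly two elements; call them $c_1$ and $c_2$. This gives a partition $A = A_1 \sqcup A_2$ according to whether $c_v = c_1$ or $c_v = c_2$. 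I would first check both parts are nonempty: if, say, $A_2 = \emptyset$, then no vertex of $A$ is incident to a $c_2$-edge, contradicting that $c$ shares a vertex with $c_2$.

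Next I would determine the colors available on every type of edge. A vertex $b \in B$ has $c_1$-edges (to $A_1$) and $c_2$-edges (to $A_2$), so its two incident colors are exactly $\{c_1,c_2\}$; hence every edge inside $B$ is $c_1$ or $c_2$. A vertex of $A_1$ has incident colors $\{c,c_1\}$ and a vertex of $A_2$ has $\{c,c_2\}$; therefore every edge between $A_1$ and $A_2$ is $c$, every edge inside $A_1$ is $c$ or $c_1$, every edge inside $A_2$ is $c$ or $c_2$, and every edge between $A$ and $B$ is $c_1$ or $c_2$ by construction. Since every edge of $K_n$ falls into one of these classes, the coloring uses precisely the three colors $c$, $c_1$, $c_2$.

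Finally I would verify that $c_1$ and $c_2$ inherit the two stated properties (the property for $c$ holds by hypothesis). The $c_1$-edges are incident exactly to the vertices in $A_1 \cup B$: vertices of $A_2$ have incident colors $\{c,c_2\}$ and so meet no $c_1$-edge, whereas $A_1$ and $B$ clearly do. As $A_2 \neq \emptyset$, we get $V(K_n^{c_1}) \subsetneq V(K_n)$. Moreover $c_1$ shares a vertex with $c$ (any vertex of $A_1$, which lies in $A = V(K_n^{c})$) and with $c_2$ (any vertex of $B$), and with nothing else because only three colors exist; hence $c_1$ shares vertices with exactly two other color subgraphs. The argument for $c_2$ is symmetric, which finishes the proof.

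The only place that needs care — the ``main obstacle'' — is the bookkeeping that forces both $A_1$ and $A_2$ to be nonempty and that rules out a fourth color creeping in on the edges inside $A_1$, inside $A_2$, or inside $B$. Once the incident-color sets $\{c,c_1\}$, $\{c,c_2\}$, $\{c_1,c_2\}$ on the three groups $A_1$, $A_2$, $B$ are established, the rest is immediate.
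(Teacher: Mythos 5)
Your proof is correct and follows essentially the same route as the paper's: both arguments rest on the observation that every edge from $V(K_n^{c})$ to its complement must be colored $c_1$ or $c_2$, which saturates the outside vertices with $\{c_1,c_2\}$ and each inside vertex with $c$ plus one of $c_1,c_2$, leaving no room for further colors. Your version is more explicit — the partition $A = A_1 \sqcup A_2$, the nonemptiness check, and the verification that $c_1$ and $c_2$ inherit the two properties are all left implicit in the paper's terser write-up — but the underlying argument is the same.
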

\begin{proof}
Let $K_n^{c}$ be a color subgraph of $K_n$ that shares vertices with exactly two other color subgraphs $K_n^{c_1}$ and $K_n^{c_2}$. As in the proof of Lemma \ref{lemma:clique1}, $V(K_n) \setminus V(K_n^{c}) \subset V(K_n^{c_i}), i = 1,2$. Thus, all vertices $V(K_n) \setminus V(K_n^{c})$ are saturated with 2 colors. Since $V(K_n^{c})$ was assumed to be incident to only the three colors, there cannot be any other colors. Furthermore, none of the colors is incident to all vertices. \qed
\end{proof}

In the following lemma we cover the remaining non-trivial alternative, that is, there is a color that shares vertices with exactly one other color. This implies the presence of a color incident to all vertices. From now on we call such a color \textit{global}.

\begin{lemma}\label{lemma:clique3}
Given a feasible edge 2-coloring of $K_n$ and a color subgraph $K_n^c$ that shares vertices with exactly one other color subgraph $K_n^F$, and $V(K_n^c) \subset V(K_n)$, the color $F$ is incident to all vertices of $K_n$.
\end{lemma}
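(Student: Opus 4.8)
The plan is to exploit the at-most-two-colors constraint at vertices of $K_n^c$, just as in the proofs of Lemmas~\ref{lemma:clique1} and~\ref{lemma:clique2}. First I would fix an arbitrary vertex $v \in V(K_n^c)$. Since $v$ is incident to a $c$-colored edge, the color $c$ is one of the (at most two) colors incident to $v$. Because $V(K_n^c) \subset V(K_n)$ is proper, the set $W := V(K_n) \setminus V(K_n^c)$ is nonempty, and in the clique every edge joining $v$ to a vertex of $W$ exists and cannot be colored $c$ (otherwise that endpoint would lie in $V(K_n^c)$). Hence all edges from $v$ to $W$ carry the single color that remains available at $v$; call it $c_v$.

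Next I would argue that $c_v$ must equal $F$. The color subgraph $K_n^{c_v}$ shares a vertex with $K_n^c$, namely $v$ itself: $v$ is incident to a $c$-edge and, since $W \neq \emptyset$, also to a $c_v$-edge, so $v \in V(K_n^c) \cap V(K_n^{c_v})$. By hypothesis $K_n^c$ shares vertices with exactly one other color subgraph, namely $K_n^F$, so $c_v = F$. As $v$ was arbitrary in $V(K_n^c)$, this shows that for every $v \in V(K_n^c)$ all edges from $v$ to $W$ are colored $F$; in particular every vertex of $V(K_n^c)$ is incident to $F$.

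It remains to deal with the vertices of $W$. Here I would use that $c$ is an actually used color, so $K_n^c$ has at least one edge and therefore $V(K_n^c)$ is nonempty. Then each $w \in W$ is joined (in the clique) by an edge to some $v \in V(K_n^c)$, and by the previous step that edge is colored $F$, so $w$ is incident to $F$ as well. Combining the two cases, $F$ is incident to all vertices of $K_n$, i.e.\ $F$ is global, which is the claim.

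I do not expect a real obstacle: the lemma is essentially a one-step refinement of Lemma~\ref{lemma:clique1}. The only points requiring care are the (immediate) nonemptiness of $W$, which follows from $V(K_n^c) \subset V(K_n)$ being proper, and the nonemptiness of $V(K_n^c)$, which follows from $c$ being a color that is actually present; both are needed so that the ``shared vertex'' deduction has a witness vertex to apply to.
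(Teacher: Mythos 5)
Your proposal is correct and follows essentially the same route as the paper: the edges in the cut between $V(K_n^c)$ and its nonempty complement cannot be colored $c$, and the ``shares vertices with exactly one other color subgraph'' hypothesis forces them all to be $F$, which makes $F$ incident to every vertex on both sides of the cut. Your version merely spells out the witness-vertex details (nonemptiness of both sides, the per-vertex two-color argument) that the paper's two-sentence proof leaves implicit.
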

\begin{proof}
The edges between $V(K_n^c)$ and the rest of the vertices must be colored with some other color than $c$. Since $c$ is incident only to $V(K_n^c)$, the edges between $V(K_n^c)$ and the rest of the vertices must be colored with $F$. Thus, $F$ is incident to all vertices of $K_n$. \qed
\end{proof}

We now have enough tools to provide the actual lower bound. First we show that if there are more than four colors, one of them must be global. This, in turn, yields that one of the colors has over one third of all edges. Since the alternative is to have three or less different colors, the lower bound follows.

\begin{theorem}\label{theorem:clique}
For min-max edge 2-coloring, the following holds:
\begin{equation}\label{eqn:clique-lb}
\textrm{OPT}(K_n) \geq \left\lceil \frac{1}{3} |E(K_n)| \right\rceil = \left\lceil \frac{n(n-1)}{6} \right\rceil
\end{equation}
\end{theorem}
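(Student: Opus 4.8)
The plan is to argue by contradiction. Since all color classes have integer size, it suffices to assume that some feasible edge $2$-coloring of $K_n$ has every color class of size strictly less than $B:=\frac{n(n-1)}{6}=\frac13|E(K_n)|$ and derive a contradiction, which then yields $\textrm{OPT}(K_n)\geq\lceil B\rceil$. Note that $|E(K_n)|=3B$. If the coloring uses at most three colors, then $3B=|E(K_n)|=\sum_c|E^c|<3B$, a contradiction; so I may assume the coloring uses four or more colors (in particular $n\geq4$).

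The next step is to observe that a coloring with four or more colors must contain a \emph{global} color. Indeed, suppose no color is global and pick any color $c$. Since $V(K_n^c)\subset V(K_n)$ is a proper nonempty subset, there is an edge from $V(K_n^c)$ to a vertex outside it; this edge has a color $c'\neq c$ and shares an endpoint with $K_n^c$, so $K_n^c$ shares vertices with at least one other color subgraph, and by Lemma~\ref{lemma:clique1} with at most two. If it shares vertices with exactly two, Lemma~\ref{lemma:clique2} forces the coloring to use only three colors --- a contradiction; if it shares vertices with exactly one, Lemma~\ref{lemma:clique3} forces that color to be global --- again a contradiction. Hence a global color $F$ exists.

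The remaining, and hardest, step is to turn the existence of $F$ into the quantitative bound. Because $F$ meets every vertex, each vertex is incident to at most one color besides $F$, so the vertex set partitions into a set $V_0$ of vertices seeing only $F$ together with pairwise disjoint sets $V_1,\dots,V_k$, where $V_i$ consists of the vertices whose incident colors are exactly $\{F,c_i\}$ and $c_1,\dots,c_k$ are the non-$F$ colors. Then $\sum_{i=1}^k|V_i|\leq n$, and every $c_i$-colored edge has both endpoints in $V_i$, so with $n_i:=|V_i|$ we get $|E^{c_i}|\leq\frac{n_i(n_i-1)}{2}$. Let $n_1=\max_i n_i$. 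The assumption $|E^F|<B$ gives $\sum_i|E^{c_i}|=|E(K_n)|-|E^F|>3B-B=2B$, hence $\sum_i\frac{n_i(n_i-1)}{2}>2B$. Combining $\frac{n_i(n_i-1)}{2}\leq\frac{n_i(n_1-1)}{2}$ with $\sum_i n_i\leq n$ forces $\frac{(n_1-1)n}{2}>2B=\frac{n(n-1)}{3}$, i.e. $n_1>\frac{2n+1}{3}$, so $n-n_1<\frac{n-1}{3}$. But then $\sum_{i\geq2}n_i\leq n-n_1$, so $\sum_{i\geq2}|E^{c_i}|\leq\sum_{i\geq2}\frac{n_i(n_i-1)}{2}\leq\frac{(n-n_1)(n-n_1-1)}{2}<B$ by an elementary estimate from $n-n_1<\frac{n-1}{3}$; since also $|E^{c_1}|<B$, this gives $\sum_i|E^{c_i}|<2B$, contradicting $\sum_i|E^{c_i}|>2B$.

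The main obstacle is exactly this last paragraph. Lemmas~\ref{lemma:clique1}--\ref{lemma:clique3} only deliver the soft fact that a global color exists, and I must extract from it that some color already carries a third of the edges. The leverage is that a global color confines every other color to its own private vertex set, and these sets are pairwise disjoint; a convexity-type estimate for $\sum_i\frac{n_i(n_i-1)}{2}$ then pins the largest such set to more than $\frac{2n}{3}$ vertices, after which the remaining non-global classes are collectively too small to hold more than $\frac23$ of the edges. Only the bookkeeping of which class is largest and the elementary inequalities require care; the rest --- the three-color case and the extraction of a global color --- is routine.
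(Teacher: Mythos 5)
Your proof is correct, and while it leans on the same structural groundwork as the paper (ruling out the three-color case by a trivial count, then using Lemmas~\ref{lemma:clique1}--\ref{lemma:clique3} to extract a global color $F$ once four or more colors are present), the quantitative finish is genuinely different. The paper fixes the largest non-global color class, with vertex set of size $k_c$, and observes that the cut between that set and its complement is entirely $F$-colored; the inequality $k_c(n-k_c)\leq\frac{1}{3}|E(K_n)|$ then forces $k_c<\frac{n}{3}$ or $k_c>\frac{2n}{3}$, and a two-case analysis (either assembling a union of non-global classes whose total vertex count lands in $[\frac{n}{3},\frac{2n}{3}]$, which violates the same cut bound, or counting the edges left over for $F$ and the large class) completes the contradiction. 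You instead never count the cut at all: you use only that the non-global classes live on pairwise disjoint vertex sets $V_i$, so $\sum_i\frac{n_i(n_i-1)}{2}>2B$ together with $\sum_i n_i\leq n$ pins $\max_i n_i>\frac{2n+1}{3}$, after which the remaining non-global classes fit inside fewer than $\frac{n-1}{3}$ vertices and carry fewer than $B$ edges, giving $\sum_i|E^{c_i}|<2B$. Your route buys a cleaner argument: it avoids the square-root manipulation in the paper's cut inequality and, more importantly, the slightly delicate intermediate-value step of grouping color classes to hit the middle range $[\frac{n}{3},\frac{2n}{3}]$. The paper's cut argument, on the other hand, makes the geometric origin of the $\frac{1}{3}$ threshold more transparent and is the version the authors reuse in the subsequent discussion of when the bound is tight. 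One small point worth making explicit in your write-up: the disjointness of the $V_i$ and the containment of every $c_i$-edge inside $V_i$ follow from $F$ being global together with $q=2$ (each vertex sees $F$ plus at most one other color), which you do state, so no gap remains.
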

\begin{proof}
First of all, we observe that in order to have $OPT(K_n) < \left\lceil \frac{1}{3} |E(K_n)| \right\rceil$, at least four different colors must be used in an optimal coloring. Assume this is possible. With at least four colors, Lemmas \ref{lemma:clique1} and \ref{lemma:clique2} imply that the colors not incident to all vertices can share vertices with only one other color. By Lemma \ref{lemma:clique3}, that other color is the global color $F$. Now, let $K_n^{c}$ be the color subgraph with the largest proper subset of vertices of $K_n$, and let $k_{c} = |V(K_n^{c})|$. Edges of only the global color fill the cut (i.e. the set of edges between two groups of vertices) between $V(K_n^{c})$ and the rest of the $n-k_{c}$ vertices, thus
\begin{displaymath}
k_{c}(n-k_{c}) \leq \frac{1}{3} |E(K_n)| = \frac{n(n-1)}{6}.
\end{displaymath}
With the help of basic calculus, this yields
\begin{equation}\label{eqn:clique1}
k_{c} \leq \frac{n}{2} - \sqrt{\frac{n^2 + 2n}{12}} < \left(\frac{1}{2} - \frac{1}{\sqrt{12}}\right)n < \frac{1}{3}n
\end{equation}
or
\begin{equation}\label{eqn:clique2}
k_{c} \geq \frac{n}{2} + \sqrt{\frac{n^2 + 2n}{12}} > \left(\frac{1}{2} + \frac{1}{\sqrt{12}}\right)n > \frac{2}{3}n.
\end{equation}
If (\ref{eqn:clique1}) is true, there are two possibilities: either all non-global colors are incident to a total of less than a third of all vertices, or there is a set of non-global colors that are incident to a total of $k$ vertices, so that $\frac{1}{3}n \leq k \leq \frac{2}{3}n$. In the former case, $|E(K_n^F)| > \frac{1}{3} |E(K_n)|$, a contradiction. In the latter case, the cut between the $k$ and the other $n-k$ vertices are again filled with edges of the global color, as in the case of $k_{c}$, but this time $k$ fails to satisfy (\ref{eqn:clique1}) or (\ref{eqn:clique2}), leading to a contradiction.

If (\ref{eqn:clique2}) is true, there are $k < \frac{1}{3}n$ vertices for the rest of the colors to occupy. In total, these vertices have at most the following amount of edges between them:
\begin{displaymath}
|E(K_k)| = \frac{k(k-1)}{2} < \frac{\frac{1}{3}n^2 - n}{6} < \frac{n^2-n}{6} = \frac{1}{3}|E(K_n)|.
\end{displaymath}
Thus, over two thirds of the edges are left for the two other colors to share, leaving the lower bound out of reach.

Now, the only way to achieve the suggested lower bound is by using three colors, in which case the bound is trivial. \qed
\end{proof}

Most of the time, the lower bound is actually tight, and it is achievable only with a coloring described in Lemma \ref{lemma:clique2} (i.e. every vertex incident to exactly two colors, no global color) for two reasons. First, as we saw in the above proof, the lower bound is out of reach using four colors. Second, if one of the three colors is global, the other colors need to satisfy either (\ref{eqn:clique1}) or (\ref{eqn:clique2}), leaving at least one of them too small. Figure \ref{img:clique-coloring} shows an optimal coloring of $K_6$.

\begin{figure}[htb]
\centering \includegraphics[scale=0.75]{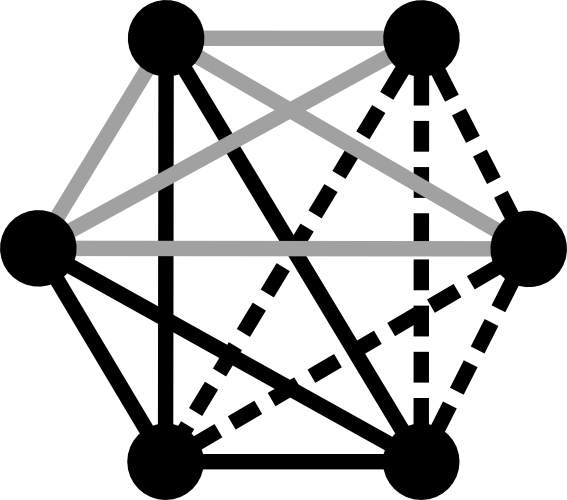}
\caption{An optimal coloring of $K_6$.}\label{img:clique-coloring}
\end{figure}

The most even way to distribute the edges between three colors is to first divide the vertices of $K_n$ to three groups of sizes $k = \left\lfloor \frac{n}{3} \right\rfloor$ and $k + 1$, depending on the remainder of the division. Each color is then incident to the vertices of two of the groups, each group is incident to two colors.

If the remainder is 1, then $k = \frac{n-1}{3}$. One color is incident to $2k$ vertices, while two other colors are incident to $2k + 1$ vertices each. If the ``smaller'' color subgraph with $2k$ vertices can accommodate one third of the edges, distributing the rest of the edges evenly to the two ``bigger'' color subgraphs is trivial. Otherwise, it is not possible to color the edges quite evenly, and the remaining over two thirds of edges still need to be shared between the bigger color groups. More precisely, the exact optimum can be written as
\begin{eqnarray}\nonumber
\textrm{OPT}(K_n) & = & \max \left( \left\lceil \frac{1}{3} |E(K_n)| \right\rceil, \left\lceil \frac{2 k(k+1) + \frac{k(k+1)}{2}}{2} \right\rceil \right) \\ \nonumber
& = & \max \left( \left\lceil \frac{1}{3} |E(K_n)| \right\rceil, \left\lceil \frac{5}{4}k(k+1) \right\rceil \right).
\end{eqnarray}

If the remainder is 2, then $k = \frac{n-2}{3}$. There are two colors incident to $2k + 1$ vertices and one color incident to $2k + 2$ vertices. Achieving the lower bound is possible, if the bigger color subgraph can avoid coloring more than one third of all edges. If not, the minimum size of the bigger color group is the optimum, that is
\begin{eqnarray}\nonumber
\textrm{OPT}(K_n) & = & \max \left( \left\lceil \frac{1}{3} |E(K_n)| \right\rceil, (k+1)^2 \right).
\end{eqnarray}

\subsection{Biclique}\label{sssect:biclique}

In this subsection we give a lower bound for min-max edge 2-coloring of a biclique $K_{m,n}$. Throughout this subsection we use $V_1$ and $V_2$ to denote the two independent sets of a biclique $K_{m,n}$. We also assume that $m \geq n$. The labels are chosen so that $|V_1| = m$ and $|V_2| = n$. Furthermore, we denote the set of vertices in $V_i$ incident to color $c$ by $V_i^c$.

As bicliques are not that different from cliques, the upcoming proofs are somewhat reminiscent of those in the previous subsection. For instance, the following lemma states that the absence of a global color implies at most four colors. We use this result in a similar fashion to how we used the three lemmas in the previous proof.

\begin{lemma}\label{lemma:biclique}
In an edge 2-coloring of a biclique $K_{m,n} = (V_1 + V_2, E)$, assume that no color is incident to all vertices. Then there can be at most four distinct colors.
\end{lemma}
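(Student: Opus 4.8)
The plan is to run a short case analysis that is organized around one ``pinning'' observation: if two vertices on the same side of $K_{m,n}$ are incident to disjoint pairs of colors, then every vertex on the opposite side must use colors only from the (at most four element) union of those two pairs, and since every color of $K_{m,n}$ is incident to at least one vertex of $V_1$ and at least one vertex of $V_2$, the whole coloring uses at most four colors. Everything else will be arranged so as to reduce to this situation or to something even stronger.

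First I would dispose of vertices seeing fewer than two colors. If some $v \in V_1$ is incident to a single color $c_0$, then every vertex of $V_2$ is adjacent to $v$ and hence incident to $c_0$, so $V_2^{c_0} = V_2$; since no color is global there is a vertex $u \in V_1 \setminus V_1^{c_0}$, and the at most two colors at $u$ together with $c_0$ are then the only colors that can occur at any vertex of $V_2$, giving at most three colors in total. The same argument applies to a one-colored vertex of $V_2$. Hence I may assume every vertex of $K_{m,n}$ is incident to exactly two colors. Next, if two vertices of $V_1$ have disjoint color pairs, the pinning observation immediately yields at most four colors; likewise for $V_2$. So I may further assume that the family of color pairs occurring on $V_1$ is pairwise intersecting, and similarly for $V_2$.

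The final step uses the elementary fact that a pairwise intersecting family of two-element sets is either a \emph{star} (all pairs contain a common color) or is exactly a \emph{triangle} $\{x,y\}, \{y,z\}, \{x,z\}$. If the $V_1$-family is a star with common color $a$, then $V_1^a = V_1$; since $a$ is not global there is $z \in V_2 \setminus V_2^a$, and the two colors at $z$ together with $a$ bound the colors available at every vertex of $V_1$, so there are at most three colors. If instead the $V_1$-family is a triangle, fix vertices $v_{xy}, v_{yz}, v_{xz} \in V_1$ realizing the pairs $\{x,y\},\{y,z\},\{x,z\}$; each $w \in V_2$ is adjacent to all three, so its two colors must meet each of these three pairs, and a two-element set meeting all three pairs is necessarily contained in $\{x,y,z\}$ (an element outside $\{x,y,z\}$ meets none of them, forcing the other element to lie in all three, which is impossible). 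Thus every vertex of $V_2$, and therefore the whole coloring, uses at most three colors. The symmetric arguments cover the cases where the $V_2$-family is a star or a triangle, so in every remaining case there are at most three colors, and in all cases at most four.

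The part that needs care is setting up the pinning step cleanly and checking the star/triangle classification; a point worth emphasizing in the writeup is that the bound $4$ (rather than $3$) is actually needed only in the disjoint-pairs case, whereas the hypothesis that no color is global is exactly what eliminates the star and one-colored-vertex configurations, which would otherwise allow a global color and hence unboundedly many colors.
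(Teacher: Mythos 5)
Your proof is correct, and it takes a genuinely different route from the paper's. The paper anchors its case analysis on a single color subgraph: it fixes a color $c_0$ with, say, $V_2^{c_0} \subsetneq V_2$, observes that every vertex of $V_1^{c_0}$ is then saturated and that any further color touching $V_1^{c_0}$ must cover all of $V_2 \setminus V_2^{c_0}$, and then splits into cases according to how many colors other than $c_0$ meet $V_1^{c_0}$ and $V_2^{c_0}$. You instead shift attention from color subgraphs to the family of two-element color sets carried by the vertices of one side: after using non-globality to eliminate one-colored vertices, you either find two disjoint pairs on one side (which pins the opposite side inside a four-element union, the only place the bound $4$ is actually attained) or you have a pairwise-intersecting family of $2$-sets, which must be a star or a triangle, and both of those configurations cap the coloring at three colors. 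The two arguments rely on the same underlying mechanism --- a vertex of one side adjacent to everything on the other side can absorb at most two colors --- but your decomposition is more systematic (the star/triangle classification makes exhaustiveness transparent) and yields the extra structural fact that four colors force two disjoint color pairs on one side; the paper's version is less modular but stays in the color-subgraph language it reuses from the clique lemmas.
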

\begin{proof}
Consider color $c_0$. By the assumption, either $V_1^{c_0} \subset V_1$ or $V_2^{c_0} \subset V_2$ (or both). For clarity, we assume $V_2^{c_0} \subset V_2$. Now, every vertex in $V_1^{c_0}$ has to be incident to some other colors, in order to color edges between $V_1^{c_0}$ and $V_2 \setminus V_2^{c_0}$. This makes each $v \in V_1^{c_0}$ saturated with two colors. $V_1^{c_0}$ can be incident to at most two different colors other than $c_0$, since all those other colors will be incident to all $v \in V_2 \setminus V_2^{c_0}$. If $V_1^{c_0} = V_1$, there can be no more colors. Otherwise, we can repeat the arguments so far, swapping 1s and 2s in place. This leads to the conclusion that all vertices in $V_2^{c_0}$ are also saturated with at most two other colors.

If $V_1^{c_0}$ (or $V_2^{c_0}$) is incident to exactly two other colors, the remainder $V_2 \setminus V_2^{c_0}$ ($V_1 \setminus V_1^{c_0}$) will be saturated by them. This makes all of $V_2$ ($V_1$) saturated, allowing no more colors. If also $V_2^{c_0}$ ($V_1^{c_0}$) is incident to exactly two other colors, at least one of them must be the same as one of the other colors incident to $V_1^{c_0}$ ($V_2^{c_0}$), so that the remainders can share a color. Thus, we have at most four colors.

If both $V_1^{c_0}$ and $V_2^{c_0}$ are incident to exactly one other color, those other colors must be different. Otherwise, the other color would necessarily be global by previous arguments, contradicting the assumption. So we have $V_1 \setminus V_1^{c_0}$ and $V_2 \setminus V_2^{c_0}$ incident to different colors. The only uncolored edges at this point are the ones between these two remainders. Having a new color incident to any vertex in the remainders leads to the opposing remainder to be saturated, allowing no more colors. Thus we end up with at most four colors, and all possibilities are now examined. \qed
\end{proof}

This lemma suffices as leverage for the proof of the following lower bound. The idea is again to show, that with a global color implied by five or more colors, it is impossible to get below the suggested lower bound.

\begin{theorem}\label{theorem:biclique}
For min-max edge 2-coloring, the following holds:
\begin{equation}\label{eqn:biclique-lb}
\textrm{OPT}(K_{m,n}) \geq \left\lceil \frac{1}{4} |E(K_{m,n})| \right\rceil = \left\lceil \frac{mn}{4} \right\rceil
\end{equation}
\end{theorem}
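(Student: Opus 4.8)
The plan is to follow the template of the clique bound and split on the number of colors used. If at most four colors appear, then the color groups partition the $mn$ edges of $K_{m,n}$, so by pigeonhole one group has at least $\lceil mn/4\rceil$ edges; equivalently, if every group had fewer than $mn/4$ edges, four (or fewer) groups could not sum to $mn$. Hence the only real case is when at least five colors are used, and there Lemma~\ref{lemma:biclique} hands us a \emph{global} color $F$ incident to every vertex (unique, since two global colors would leave room for only two colors in total). Because each vertex sees at most two colors, every vertex is incident to $F$ and to at most one further color.

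Assume for contradiction a feasible coloring in which every color group has fewer than $\lceil mn/4\rceil$, hence strictly fewer than $mn/4$, edges. Let $c_1,\dots,c_t$ be the non-global colors, $a_i=|V_1^{c_i}|$, $b_i=|V_2^{c_i}|$. Since a vertex carries at most one non-global color, the sets $V_1^{c_i}$ are pairwise disjoint, as are the $V_2^{c_i}$, so $\sum_i a_i\le m$ and $\sum_i b_i\le n$. I would first record a \emph{local} inequality: for each $i$, every edge joining $V_1^{c_i}$ to $V_2\setminus V_2^{c_i}$ and every edge joining $V_1\setminus V_1^{c_i}$ to $V_2^{c_i}$ is forced to be colored $F$ (its $V_1$- or $V_2$-endpoint admits only $F$ besides $c_i$), and in addition exactly $a_ib_i-|E(K_{m,n}^{c_i})|$ of the $a_ib_i$ edges inside $V_1^{c_i}\times V_2^{c_i}$ get color $F$; these three families of $F$-edges are disjoint, so
\[
|E(K_{m,n}^{F})|+|E(K_{m,n}^{c_i})|\;\ge\;a_in+b_im-a_ib_i\;=\;mn-(m-a_i)(n-b_i).
\]
As both summands on the left are $<mn/4$, this gives $(m-a_i)(n-b_i)>mn/2$, and therefore $a_i<m/2$ and $b_i<n/2$ for every $i$. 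Next a \emph{global} count: the groups partition all $mn$ edges, so $|E(K_{m,n}^{F})|<mn/4$ forces $\sum_i|E(K_{m,n}^{c_i})|>3mn/4$, and since $|E(K_{m,n}^{c_i})|\le a_ib_i$ we get $\sum_i a_ib_i>3mn/4$.

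Finally I would collide the two facts with Cauchy--Schwarz: $\sum_i a_i^2\le(\max_i a_i)\sum_i a_i<\frac{m}{2}\cdot m=\frac{m^2}{2}$ and likewise $\sum_i b_i^2<\frac{n^2}{2}$, hence $\big(\sum_i a_ib_i\big)^2\le\big(\sum_i a_i^2\big)\big(\sum_i b_i^2\big)<\frac{m^2n^2}{4}$, i.e. $\sum_i a_ib_i<mn/2$, contradicting $\sum_i a_ib_i>3mn/4$; this contradiction finishes the five-colors case and the theorem. I expect the main obstacle to be making the local inequality sharp enough: one must add the ``boundary'' $F$-edges to the ``interior'' $F$-edges (exactly $a_ib_i-|E(K_{m,n}^{c_i})|$ of them), because using only $\sum_i a_ib_i>3mn/4$ together with $\sum_i a_i\le m$ and $\sum_i b_i\le n$ is not enough — that mass can be concentrated in one part — and it is precisely the bound $a_i<m/2$, $b_i<n/2$ extracted from the local count that makes Cauchy--Schwarz bite.
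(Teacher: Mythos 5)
Your proof is correct, and in the main case it takes a genuinely different route from the paper. The skeleton is the same: with at most four colors the bound is pigeonhole, and with five or more colors Lemma~\ref{lemma:biclique} forces a unique global color $F$, so the non-global color classes are vertex-disjoint on each side. From there the paper proceeds by a continuous relaxation: it allows the $k_i^{c_j}$ to be fractional, runs an exchange argument to show that a coloring minimizing the global group must have some non-global color with $\alpha_1^{c_1}\alpha_2^{c_1}=\tfrac14$, and then lower-bounds the forced global cut $\alpha_1^{c_1}(1-\alpha_2^{c_1})mn \geq mn/4$. You instead stay combinatorial: the local count $|E(K_{m,n}^F)|+|E(K_{m,n}^{c_i})|\geq mn-(m-a_i)(n-b_i)$ (correctly assembled from the two boundary cuts plus the $F$-edges inside the $a_i\times b_i$ block) yields $a_i<m/2$ and $b_i<n/2$ for every non-global color under the assumption that all groups are small, and then the global count $\sum_i a_ib_i>3mn/4$ collides with Cauchy--Schwarz, which caps $\sum_i a_ib_i$ at $mn/2$. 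Your version has the advantage of avoiding the paper's somewhat informal optimization/exchange step (``grow $\alpha_2^{c_i}$ at the expense of the others''), replacing it with explicit inequalities that hold for every coloring; the paper's version, in exchange, isolates more directly \emph{which} structure (a single tight non-global color and its forced cut) is responsible for the global color being large. Both are valid; yours is arguably the more airtight write-up of the five-or-more-colors case.
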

\begin{proof}
The only chance of having a smaller OPT than suggested is by having more than four colors in an optimal coloring. By Lemma \ref{lemma:biclique}, this is possible only if there is a global color. It is impossible to have more than one global color (unless there are only two colors), since two global colors already saturate every vertex.

Next, we show that if we restrict the coloring to have one global color, it is impossible to achieve even the lower bound in (\ref{eqn:biclique-lb}). Denote the global color by $F$, and the other colors by $c_i$, $i \in \{1, \ldots, C\}$,  where $C > 4$ is the number of non-global colors. No two non-global colors can be incident to common vertices, that is, $V_l^{c_i} \cap V_l^{c_j} = \emptyset, l \in \{1,2\}, i \neq j$. For convenience, we define $k_i^{c_j} := |V_i^{c_j}|$, and $\alpha_i^{c_j} \in [0,1]$ such that $\alpha_i^{c_j}k_i^F = k_i^{c_j}$. Note, that $k_1^F = m$ and $k_2^F = n$.

Our approach is to try to force the global color group as small as possible, while keeping the other color groups just small enough, that is,
\begin{equation}\label{eqn:biclique-non-globals}
k_1^{c_i}k_2^{c_i} = \alpha_1^{c_i}m\alpha_2^{c_i}n \leq \frac{mn}{4} \ \Longrightarrow \ \alpha_1^{c_i}\alpha_2^{c_i} \leq \frac{1}{4}.
\end{equation}
In order to make analysis simpler, we allow the values $k_i^{c_j} \geq 0$ to be fractional. Since this relaxation makes the set of feasible values for $k_i^{c_j}$ only bigger, the upcoming failure to even then get below the lower bound suffices for proof.

Consider a feasible coloring, where every non-global color group is smaller than the suggested lower bound, i.e. the inequalities in (\ref{eqn:biclique-non-globals}) are strict. In such a situation, we can always make the global color group smaller by following adjustments. We take the biggest $\alpha_1^{c_i}$, and grow $\alpha_2^{c_i}$ at the expence of other $\alpha_2^{c_j}, \ j \neq i$, until either $\alpha_2^{c_i} = 1$ or $\alpha_1^{c_i}\alpha_2^{c_i} = \frac{1}{4}$. We change sides and repeat, and end up with $\alpha_1^{c_i}\alpha_2^{c_i} = \frac{1}{4}$. This procedure makes the total size of the non-global color groups bigger (and thus the global color group smaller), since the color group of $c_i$ grows faster than the other groups shrink in the exchange.

The important point here is that a coloring that minimizes the size of the global color group, must have at least one non-global color group for which (\ref{eqn:biclique-non-globals}) holds with equality (given the relaxation of $k_i^{c_j}$). Let $c_1$ be such a color, and $\alpha_1^{c_1} \geq \alpha_2^{c_1}$. From (\ref{eqn:biclique-non-globals}) it follows, that $\alpha_1^{c_1} \geq \frac{1}{2}$ and $\alpha_2^{c_1} \leq \frac{1}{2}$. The edges between $V_1^{c_1}$ and $V_2 \setminus V_2^{c_1}$ must be colored with the global color. Thus, the size of the global color group is always at least
\begin{equation}
\alpha_1^{c_1}k_1^F(1 - \alpha_2^{c_1})k_2^F \geq \frac{1}{2}m \frac{1}{2}n = \frac{mn}{4}.
\end{equation}
In conclusion, having five or more colors in an edge $q$-coloring of a biclique implies exactly one global color, which in turn makes it impossible to achieve the lower bound in (\ref{eqn:biclique-lb}). Finally, falling back to the realm of integral solutions gives

\begin{equation}
\textrm{OPT}(K_{m,n}) \geq \left\lceil \frac{mn}{4} \right\rceil = \left\lceil \frac{1}{4}|E(K_{m,n})| \right\rceil.
\end{equation} \qed
\end{proof}

As in the case of cliques, the lower bound is often tight. For example, when $m$ and $n$ are both even, it is easy to find an optimal coloring, as illustrated in figure \ref{img:biclique-coloring}. The idea is to split the vertex sets $V_1$ and $V_2$ into equal sized halves. Then, the edges between each pair of halves on opposite sides can be colored with a distinct color. Even if $m$ and $n$ were odd, the aforementioned procedure leads to an asymptotically optimal coloring.

\begin{figure}[htb]
\centering \includegraphics[scale=0.75]{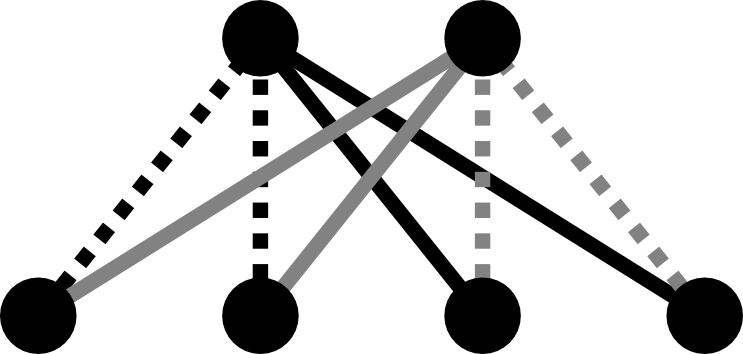}
\caption{An optimal coloring of $K_{2,4}$.}\label{img:biclique-coloring}
\end{figure}

\subsection{Hypercube}\label{sssect:hypercube}

In this subsection we give a lower bound for an optimal min-max edge 2-coloring of a hypercube $Q_n$. Also the tightness of the bound is discussed for both even and odd $n$. We begin by looking at subgraphs of $Q_n$ with $k$ vertices and the maximum number of edges they can have. Later we apply this result directly to color subgraphs with $k$ vertices.

\begin{lemma}\label{lemma:hypercube1}
In a hypercube $Q_n$, any subgraph with $k \leq |V(Q_n)|$ vertices has at most $\frac{1}{2}k \log_2 k$ edges. In other words, the average degree of such a subgraph is at most $\log_2 k$.
\end{lemma}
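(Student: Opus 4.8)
The statement is the edge-isoperimetric inequality for the hypercube in the (weak) form in which subcubes are extremal; the natural route is induction on $k$, exploiting the recursive structure that writes $Q_n$ as two disjoint copies of $Q_{n-1}$ joined by a perfect matching. First I would reduce to induced subgraphs: a subgraph on a vertex set $S$ has no more edges than $Q_n[S]$, so it suffices to bound $f(k) := \max\{ |E(Q_n[S])| : S \subseteq V(Q_n),\ |S| = k \}$, and this bound will be independent of $n$, which is what makes the induction close. The base case $k = 1$ is immediate, since $0 \le \frac{1}{2}\cdot 1\cdot\log_2 1 = 0$.

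For the inductive step, take $S$ with $|S| = k \ge 2$. Since $S$ contains two vertices that differ in some coordinate $i$, splitting $Q_n$ along coordinate $i$ gives two disjoint copies $H_0, H_1$ of $Q_{n-1}$ together with the perfect matching joining mirror vertices. Write $a = |S \cap H_0|$ and $b = |S \cap H_1|$, and assume without loss of generality $1 \le a \le b$, so that $a + b = k$ and $a, b < k$. Every edge of $Q_n[S]$ lies inside $H_0$, inside $H_1$, or in the matching, so applying the induction hypothesis inside the two $(n-1)$-cubes and bounding the number of matching edges by $\min(a,b) = a$ yields
\[
|E(Q_n[S])| \ \le\ \frac{1}{2} a\log_2 a + \frac{1}{2} b\log_2 b + a .
\]

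It remains to show, purely analytically, that $\frac{1}{2} a\log_2 a + \frac{1}{2} b\log_2 b + a \le \frac{1}{2} k\log_2 k$ for all reals $0 \le a \le b$ with $a + b = k$. I would treat $a$ as a variable in $[0, k/2]$ and set $h(a) = \frac{1}{2} a\log_2 a + \frac{1}{2}(k-a)\log_2(k-a) + a$; differentiating twice gives $h''(a) = \frac{1}{2\ln 2}\bigl( \frac{1}{a} + \frac{1}{k-a} \bigr) > 0$, so $h$ is convex on $[0, k/2]$ and therefore attains its maximum at an endpoint. Both endpoints give exactly the target value: $h(0) = \frac{1}{2} k\log_2 k$ (with the convention $0\log_2 0 = 0$) and $h(k/2) = \frac{k}{2}\log_2\frac{k}{2} + \frac{k}{2} = \frac{1}{2} k\log_2 k$. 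This closes the induction, and the ``average degree at most $\log_2 k$'' reformulation is just the restatement via $2|E|/k$.

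The one genuinely delicate point is this final inequality: the right mechanism is convexity rather than any monotonicity, and the fact that it is tight at both extreme splits $a = 0$ and $a = k/2$ reflects the two extremal shapes — a single large subcube on one side, versus a subcube split into two equal subcubes plus a matching. A small point worth stating explicitly is that the splitting coordinate exists only because $|S| \ge 2$, which is exactly why the case $k = 1$ must be peeled off as the base case.
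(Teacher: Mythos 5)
Your proof is correct and follows essentially the same route as the paper: split $Q_n$ into two copies of $Q_{n-1}$ joined by a perfect matching, bound the edges by $\frac{1}{2}a\log_2 a+\frac{1}{2}b\log_2 b+\min(a,b)$ via the induction hypothesis, and close with the analytic inequality. Your treatment of that final inequality (convexity of $h(a)$ on $[0,k/2]$ with equality at both endpoints) is cleaner and more direct than the paper's chain of substitutions $\alpha\mapsto\beta$ ending in $2^{\beta}\le 1+\beta$, but both ultimately rest on the same convexity fact, so this is a presentational rather than a substantive difference.
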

\begin{proof}
We prove the lemma by induction. We take the initial step by looking at the case $n=2$. A subgraph with $n-1=1$ vertex has $0 \leq \frac{1}{2} \log_2 1$ edges, so the lemma holds. The induction hypothesis is that the lemma holds for $Q_{n-1}$ and smaller hypercubes.

Now we take the induction step. Consider the hypercube $Q_n$. It can be partitioned into two subgraphs identical to $Q_{n-1}$, denote them by $Q_1$ and $Q_2$. Next, consider a subgraph $S$ of $Q_n$ with $k$ vertices. Denote the number of vertices of $S$ in $Q_1$ and $Q_2$ by $k_1$ and $k_2$, respectively. By the induction hypothesis, there are at most $\frac{1}{2}k_i \log_2 k_i$ edges among each of the two $k_i$ sized subgraphs of $S$. Additionally, there are at most $\min(k_1, k_2)$ edges between the two subgraphs of $S$, since each vertex in one of the hypercubes $Q_1$ and $Q_2$ is adjacent to exactly one vertex in the other. Due to symmetry and for simplicity, we can choose $k_1$ to be the smaller one. We also choose $\alpha$ so that $k_1 = \alpha k$. Consequently, $k_2 = (1 - \alpha)k$ and $\alpha \in [0,\frac{1}{2}] $. An upper bound for the number of edges in $S$ is thus as follows.
\begin{eqnarray}\label{eqn:hypercube-subsize}\nonumber
|E(S)| &\leq & \frac{1}{2}k_1 \log_2 k_1 + \frac{1}{2}k_2 \log_2 k_2 + k_1 \\ \nonumber
&& = \frac{1}{2} \log_2 (k_1^{k_1}k_2^{k_2}) + \frac{1}{2} \cdot 2\log_2 2^{k_1} \\ \nonumber
&& = \frac{1}{2} \log_2 \left((4k_1)^{k_1}k_2^{k_2}\right) \\ \nonumber
&& = \frac{1}{2} \log_2 \left((4\alpha k)^{\alpha k}\left((1-\alpha)k\right)^{(1-\alpha)k}\right) \\ \nonumber
&& = \frac{1}{2} k \log_2 \left((4\alpha)^{\alpha}(1-\alpha)^{(1-\alpha)} k^{\alpha}k^{(1-\alpha)}\right) \\ \nonumber
&& = \frac{1}{2} k \log_2 \left((4\alpha)^{\alpha}(1-\alpha)^{(1-\alpha)} k\right)
\end{eqnarray}
Now, if the right-hand side of (\ref{eqn:hypercube-subsize}) is shown to be less than or equal to $\frac{1}{2} k \log_2 k$, we are done.
\begin{eqnarray}\nonumber
& \displaystyle\frac{1}{2} k \log_2 \left((4\alpha)^{\alpha}(1-\alpha)^{(1-\alpha)} k\right) & \leq \ \frac{1}{2} k\log_2 k \\ \nonumber
\Longleftrightarrow & (4\alpha)^\alpha(1-\alpha)^{(1-\alpha)} & \leq \ 1 \\ \nonumber
\Longleftrightarrow & 4\alpha (1-\alpha)^{\frac{(1-\alpha)}{\alpha}} & \leq \ 1.
\end{eqnarray}
Observe that $4\alpha \leq 2$, since $\alpha \in [0,\frac{1}{2}]$. Thus it remains to show that $(1-\alpha)^{\frac{(1-\alpha)}{\alpha}} \leq \frac{1}{2}$. For this end, we make the following change of variables: $\beta = \frac{\alpha}{1-\alpha}$, which yields $1 - \alpha = \frac{1}{1 + \beta}$ and $\beta \in [0,1]$.
\begin{eqnarray}\nonumber
& (1-\alpha)^{\frac{(1-\alpha)}{\alpha}} & \leq \ \frac{1}{2} \\ \nonumber
\Longleftrightarrow & \left(\displaystyle\frac{1}{1 + \beta}\right)^{\frac{1}{\beta}} & \leq \ \frac{1}{2} \\ \nonumber
\Longleftrightarrow & 2^{\beta} & \leq \ 1 + \beta.
\end{eqnarray}
Since $2^{\beta}$ is convex, $1 + \beta$ is linear and equality holds, when $\beta = 0,1$, the above equation holds while $\beta \in [0,1]$. This concludes the proof. \qed
\end{proof}

The following lemma reveals, that the average degree of the whole graph bounds the maximum average degree of the color subgraphs from below. An intuitive reason for this is that otherwise there might not be enough edges in the color subgraphs to account for the edges of the original graph.

\begin{lemma}\label{lemma:hypercube2}
In a feasible edge $q$-coloring of $G$, there must be at least one color subgraph, whose average degree is greater or equal to $d_G/q$, where $d_G$ is the average degree of G.
\end{lemma}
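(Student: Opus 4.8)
The plan is a straightforward double-counting (averaging) argument over the color subgraphs. Write $G=(V,E)$ with $|V|=n$, so that $d_G = 2|E|/n$. Let $c_1,\dots,c_t$ be the colors actually used, and for each $i$ let $G^{c_i}$ be the corresponding color subgraph (edge-induced, per Definition~\ref{def:color-subgraph}), with vertex set of size $n_i$ and edge set of size $m_i$; then the average degree of $G^{c_i}$ is $d_i = 2m_i/n_i$. Note that since $G^{c_i}$ is edge-induced it has no isolated vertices, so $n_i \ge 1$ (in fact $\ge 2$) and $d_i$ is well defined.

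First I would record two identities/inequalities. Since every edge of $G$ receives exactly one color, $\sum_{i=1}^{t} m_i = |E|$. Since the coloring is a feasible edge $q$-coloring, every vertex $v$ is incident to at most $q$ distinct colors, hence $v$ lies in at most $q$ of the sets $V(G^{c_i})$; summing this over all vertices gives $\sum_{i=1}^{t} n_i = \sum_{v \in V} |\{\, i : v \in V(G^{c_i}) \,\}| \le qn$.

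Then I would argue by contradiction: suppose every color subgraph has average degree strictly less than $d_G/q$, i.e.\ $2m_i/n_i < d_G/q$, equivalently $m_i < \frac{d_G}{2q}\, n_i$, for all $i$. Summing over $i$ and using the two facts above,
\begin{displaymath}
|E| \;=\; \sum_{i=1}^{t} m_i \;<\; \frac{d_G}{2q} \sum_{i=1}^{t} n_i \;\le\; \frac{d_G}{2q}\cdot qn \;=\; \frac{d_G\, n}{2} \;=\; |E|,
\end{displaymath}
a contradiction. Hence at least one color subgraph has average degree at least $d_G/q$, as claimed.

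I do not expect any genuine obstacle here; the only points requiring a little care are (i) using the correct notion of average degree relative to the vertex set of the \emph{edge-induced} color subgraph (so that no color subgraph has spurious isolated vertices inflating $n_i$), and (ii) correctly extracting the bound $\sum_i n_i \le qn$ from the $q$-coloring constraint rather than, say, a weaker bound. Everything else is a one-line summation.
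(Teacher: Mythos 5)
Your proof is correct and is essentially identical to the paper's: both argue by contradiction, write $|E(G)|=\sum_i m_i$ with $m_i = k_i d_i/2$, use $\sum_i k_i \le qn$ from the $q$-coloring constraint, and derive $|E(G)| < |E(G)|$. The only difference is notational (your $n_i, m_i$ versus the paper's $k_i, k_i d_i/2$).
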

\begin{proof}
Assume the opposite. Consider a feasible coloring with $m$ distinct colors and each color subgraph having smaller average degree than $\frac{d_G}{q}$. Let $n = |V(G)|$, $k_1, \ldots, k_m$ the number of vertices in each color subgraph and $d_1, \ldots, d_m$ the average degrees of the color subgraphs. Since the coloring is feasible, the number of edges in $G$ can be written as follows.
\begin{displaymath}
|E(G)| = \frac{nd_G}{2} = \sum_{i=1}^n \frac{k_id_i}{2} < \sum_{i=1}^m \frac{k_id_G}{2q} = \frac{d_G}{2q} \sum_{i=1}^m k_i \leq \frac{nd_G}{2} = |E(G)|.
\end{displaymath}
The second inequality follows from the fact that each vertex can be in at most $q$ different color subgraphs, so the sum over $k_i$ is at most $qn$. Having a contradiction, the lemma follows. \qed
\end{proof}

The two previous lemmas make the proof of the next theorem relatively straightforward.

\begin{theorem}\label{theorem:hypercube}
For min-max edge 2-coloring, the following holds:
\begin{equation}\label{eqn:hypercube-lb}
\textrm{OPT}(Q_{n}) \geq \frac{1}{2}n 2^{\frac{1}{2}n - 1} = \frac{1}{2} k \log_2 k,
\end{equation}
where $k = 2^{\frac{1}{2}n}$.
\end{theorem}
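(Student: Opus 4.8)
The plan is to derive the bound directly from Lemmas~\ref{lemma:hypercube1} and~\ref{lemma:hypercube2}, with essentially no extra work. Fix an arbitrary feasible edge $2$-coloring of $Q_n$; since OPT is the minimum over all feasible colorings of the size of the largest color group, it suffices to exhibit, in this fixed coloring, one color group of size at least $\frac{1}{2}n2^{\frac{1}{2}n-1}$. Every vertex of $Q_n$ has degree exactly $n$, so the average degree is $d_{Q_n}=n$. Applying Lemma~\ref{lemma:hypercube2} with $q=2$ gives a color subgraph $S$ whose average degree $d_S$ satisfies $d_S\ge n/2$.

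The next step is to bound $|V(S)|$ from below. Write $k'=|V(S)|$; since $k'\le|V(Q_n)|$, Lemma~\ref{lemma:hypercube1} applies and gives $|E(S)|\le\frac{1}{2}k'\log_2 k'$, i.e. $d_S=2|E(S)|/k'\le\log_2 k'$. Combined with $d_S\ge n/2$, this forces $\log_2 k'\ge n/2$, hence $k'\ge 2^{n/2}=k$. Now the color group corresponding to $S$ has size $|E(S)|=\frac{1}{2}k'd_S$, and because $k'\ge 2^{n/2}$ and $d_S\ge n/2$ are both positive we may multiply the two one-sided estimates to obtain
\[
|E(S)|\ \ge\ \frac{1}{2}\cdot 2^{n/2}\cdot\frac{n}{2}\ =\ \frac{1}{4}n2^{n/2}\ =\ \frac{1}{2}n2^{\frac{1}{2}n-1}\ =\ \frac{1}{2}k\log_2 k .
\]
Since this lower bound on the largest color group holds for every feasible edge $2$-coloring of $Q_n$, it is a lower bound on OPT$(Q_n)$, which is the claim.

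I do not expect a genuine obstacle here: all the combinatorial content is already packed into the two lemmas. The only points requiring a little care are (i) checking that the color subgraph supplied by Lemma~\ref{lemma:hypercube2} indeed has at most $|V(Q_n)|$ vertices so that Lemma~\ref{lemma:hypercube1} can be invoked, (ii) translating the ``average degree $\ge n/2$'' conclusion of Lemma~\ref{lemma:hypercube2} together with the ``average degree $\le\log_2 k'$'' consequence of Lemma~\ref{lemma:hypercube1} into the vertex-count bound $k'\ge 2^{n/2}$, and (iii) noting that the edge count $\frac{1}{2}k'd_S$ can be bounded below by multiplying the separate lower bounds on $k'$ and $d_S$. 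For odd $n$ the quantity $2^{n/2}$ is irrational, but the argument is unaffected and simply yields the stated (non-integral) bound.
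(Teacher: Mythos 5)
Your proof is correct and follows essentially the same route as the paper: both arguments combine Lemma~\ref{lemma:hypercube2} (some color subgraph has average degree at least $n/2$) with Lemma~\ref{lemma:hypercube1} (average degree of a $k'$-vertex subgraph is at most $\log_2 k'$) to force that subgraph to have at least $2^{n/2}$ vertices and hence at least $\frac{1}{2}k\log_2 k$ edges. The only difference is presentational — the paper phrases it as a contradiction (if every color group were smaller, all average degrees would fall below $n/2$), whereas you argue directly — and your multiplication of the two one-sided bounds is valid since both factors are nonnegative.
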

\begin{proof}
The right-hand side of (\ref{eqn:hypercube-lb}) is equal to the maximum number of edges in a subgraph of $Q_n$ with $k$ vertices, as follows from Lemma \ref{lemma:hypercube1}. The lemma also implies, that the average degree of any subgraph is smaller than $\log_2 k$, if it has less than $k$ vertices. A subgraph with $k$ or more vertices and less than $\frac{1}{2} k \log_2 k$ edges also has smaller average degree than $\log_2 k$. So, in a coloring where each color subgraph has less than $\frac{1}{2} k \log_2 k$ edges, the average degrees of the color subgraphs are all less than $\log_2 k = \frac{n}{2}$. Since $n$ is the average degree of $Q_n$, such an edge 2-coloring cannot be feasible according to Lemma \ref{lemma:hypercube2}. \qed
\end{proof}

\begin{figure}[htb]
\centering \includegraphics[scale=0.75]{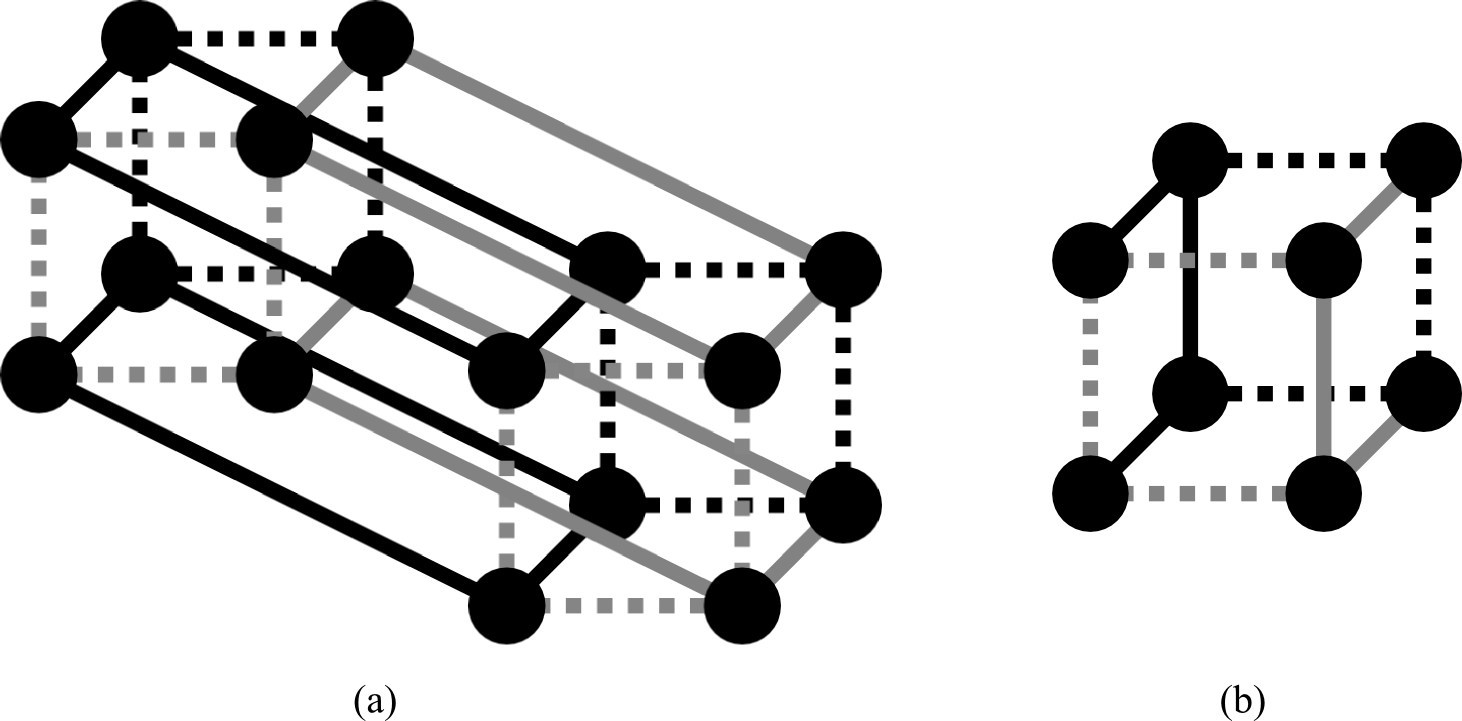}
\caption{An optimal coloring of (a) $Q_4$ and (b) $Q_3$. In (a) colors are reused to avoid complicated patterns.}\label{img:hypercube-coloring}
\end{figure}

The lower bound is tight for even $n$. A feasible coloring satisfying (\ref{eqn:hypercube-lb}) with equality for $n = 2m$ is constructed as follows, for example. Consider the bitstrings of length $n$ representing the vertices. Split the string into two halves of length $m$. Keeping, say, the left half fixed and cycling through the possible bit values for the right half gives a set of $2^m$ vertices, inducing a subgraph identical to an $m$-cube. Furthermore, going through all possible fixed strings on the left side gives $2^m$ disconnected $m$-cubes. Color each of these with a distinct color. At this point, every vertex is incident to exactly one color. Repeat the process, this time keeping the right side of the bitstring fixed. We get $2^m$ disconnected $m$-cubes consisting of the remaining uncolored edges. Again, color each cube with a distinct color, introducing exactly one new color to each vertex. Now all edges are colored, and each vertex is incident to exactly two colors, so the coloring is feasible. The size of each color group is $m2^{m-1}$, which satisfies (\ref{eqn:hypercube-lb}) with equality. 

For odd $n$, there is a coloring with color group size $(2m + 1)2^{m-1}$, where $m = \left\lfloor \frac{n}{2} \right\rfloor$. We achieve this as follows. First, we take two identically and optimally colored $(n-1)$-cubes. For each color, we have an $m$-cube of that color in both bigger cubes. For each such pair of $m$-cubes we add $2^{m-1}$ edges of the same color between corresponding vertices until we have an $n$-cube. Note that the size of each color group is now $2m2^{m-1} + 2^{m-1} = (2m + 1)2^{m-1}$. Whether this is an optimal coloring, is an open question, although the existence of a better coloring seems unlikely. Example colorings of $n$-cubes for both even and odd $n$ are presented in figure \ref{img:hypercube-coloring}.

\section{Lower and Upper Bounds}\label{sect:lb-ub}

In this section we present two lower bounds for the optimum (OPT) of the min-max edge $q$-coloring problem, and we show that a trivial coloring algorithm achieves a linear approximation factor in the number of vertices in the graph.


We begin with a lower bound in terms of maximum degree. The bound is very simple, but nevertheless useful in some proofs.
\begin{theorem}\label{theorem:max-deg-lb}
Denote the maximum degree of graph $G$ by $\Delta(G)$. Then,
\begin{equation}
\textrm{OPT} \geq \left\lceil \frac{\Delta(G)}{q} \right\rceil.
\end{equation}
\end{theorem}
\begin{proof}
The theorem follows directly from the fact that only $q$ different colors can be incident to any vertex of $G$. \qed
\end{proof}


The next lower bound is in terms of average degree. This bound is rather loose for graphs with a small average degree, but becomes tighter as the graphs get denser.
\begin{theorem}\label{theorem:avg-deg-lb}
Let the average degree of $G$ be denoted by $d(G)$. Then,
\begin{displaymath}\label{eqn:avg-deg-lb}
\textrm{OPT} \geq \frac{d^2(G)}{2q^2}.
\end{displaymath}
\end{theorem}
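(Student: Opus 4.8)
The plan is to combine Lemma~\ref{lemma:hypercube2} with an elementary edge-counting argument. By Lemma~\ref{lemma:hypercube2}, every feasible edge $q$-coloring of $G$ contains a color subgraph $H$ whose average degree $\delta$ satisfies $\delta \geq d(G)/q$. Since the largest color group of that coloring has at least $|E(H)|$ edges, it suffices to bound $|E(H)|$ from below by $\frac{d^2(G)}{2q^2}$, uniformly over all feasible colorings; taking the minimum over such colorings then yields the theorem.

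First I would apply the handshake identity: writing $k$ for the number of vertices of $H$ (by Definition~\ref{def:color-subgraph}, $H$ is edge-induced, hence has no isolated vertices), we have $|E(H)| = \frac{1}{2} k \delta$. The key observation is that a simple graph on $k$ vertices has maximum degree at most $k-1$, so its average degree is at most $k-1$; in particular $\delta < k$. Substituting this into the identity gives $|E(H)| = \frac{1}{2} k \delta > \frac{1}{2}\delta^2$.

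It then remains to chain the inequalities, using $\delta \geq d(G)/q$:
\[
\textrm{OPT} \;\geq\; |E(H)| \;>\; \frac{1}{2}\delta^2 \;\geq\; \frac{1}{2}\left(\frac{d(G)}{q}\right)^{2} \;=\; \frac{d^2(G)}{2q^2}.
\]
Since the color subgraph $H$ is produced from an arbitrary feasible coloring, every feasible coloring has a color group of size strictly larger than $\frac{d^2(G)}{2q^2}$, and hence $\textrm{OPT} \geq \frac{d^2(G)}{2q^2}$, as claimed.

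I do not expect a genuine obstacle here; the one point that needs a little care is the strict inequality $\delta < k$ for the color subgraph $H$, which relies on $G$ — and therefore $H$ — being a simple graph, so that no vertex can have degree $k$ or larger. The degenerate case $d(G) = 0$ (a graph with no edges) makes the inequality trivially true.
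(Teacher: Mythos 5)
Your proof is correct, but it is organized differently from the paper's. The paper works ``top-down'': it first shows that any graph with at most $k$ edges has average degree at most $\sqrt{2k}$ (arguing via the extremality of cliques), applies this to every color subgraph of an optimal coloring with $m=\textrm{OPT}$ edges per group, and then sums over the at most $q$ color subgraphs containing each vertex to conclude $d(G) \leq q\sqrt{2m}$. You instead work ``bottom-up'': you invoke Lemma~\ref{lemma:hypercube2} (which the paper proves in the hypercube section but, perhaps surprisingly, never cites in its own proof of this theorem) to extract a single color subgraph $H$ with average degree $\delta \geq d(G)/q$, and then lower-bound $|E(H)| = \tfrac{1}{2}k\delta > \tfrac{1}{2}\delta^{2}$ using only the handshake identity and the fact that a simple graph on $k$ vertices has average degree less than $k$. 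The two proofs rest on the same counting (your inequality $|E(H)| > \delta^{2}/2$ is exactly the contrapositive of the paper's $d \leq \sqrt{2k}$, and the paper's summation over color subgraphs is a rederivation of Lemma~\ref{lemma:hypercube2}), but your route buys two things: it reuses an already-proved lemma instead of repeating the argument inline, and it replaces the paper's somewhat informal clique-extremality/convexity discussion with a two-line elementary bound. You are right to flag the small side conditions: the strict inequality $|E(H)| > \tfrac{1}{2}\delta^{2}$ needs $\delta > 0$, which holds because $H$ is edge-induced (every vertex of $H$ has degree at least one), and the case $d(G)=0$ is trivial, as you note.
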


\begin{proof}
For convenience, we denote OPT by $m$. The idea is to find an upper bound for the average degree of $G$ in terms of $m$, which in turn yields a lower bound for $m$ in terms of the average degree.

First we show that the average degree of a graph with at most $k$ edges is at most $\sqrt{2k}$. If $k=1$ and there are $n$ vertices, the average degree is certainly less than $\sqrt{2k}$. Observe that an $n$-clique $K_n$ has the largest possible average degree, given at most $n$ vertices or $|E(K_n)|$ edges. If $k = |E(K_n)|$, we get
\begin{displaymath}
k = \frac{n(n-1)}{2} \quad \Rightarrow \quad n = \frac{1}{2} + \sqrt{\frac{1}{4} + 2k} \quad \Rightarrow \quad d(K_n) = n - 1 \leq \sqrt{2k}.
\end{displaymath}
If we keep $n$ fixed and add edges (until we have a clique), then the average degree grows linearly in $k$. Since $\sqrt{2k}$ is convex, it is larger than the average degree of any graph with $k$ edges. 


Since each subgraph induced by a color group in an optimal coloring has at most $m$ edges, their average degrees are at most $\sqrt{2m}$. Furthermore, $d(G)$ is maximized, if each vertex is in $q$ color subgraphs, whose average degree is $\sqrt{2m}$. Let $n = |V(G)|$. We get
\begin{displaymath}
d(G) = \frac{2|E(G)|}{|V(G)|} \leq \frac{2qn\sqrt{2m}}{2n} = q\sqrt{2m}.
\end{displaymath}
Thus, there is no graph with higher average degree than $q\sqrt{2\textrm{OPT}}$. The claim follows:
\begin{displaymath}
\textrm{OPT} \geq \frac{d^2(G)}{2q^2}.
\end{displaymath} \qed
\end{proof}


In the following we show that the approximation factor of the most trivial algorithm for min-max edge $q$-coloring is linear in the number of vertices. Here is the definition of the trivial algorithm, followed by the theorem stating the approximation factor.

\begin{algorithm}
\caption{Trivial coloring algorithm}
\label{alg:trivial}
\textbf{Input:} Graph $G$ \\
1. Assign the same color to each edge of $G$ \\
2. $m \longleftarrow |E(G)|$ \\
\textbf{Output:} $m$
\end{algorithm}

\begin{theorem}\label{theorem:factor-ub}
The approximation factor of Algorithm \ref{alg:trivial} is $O(n)$, where $n$ is the number of vertices in the input graph.
\end{theorem}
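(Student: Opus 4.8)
The plan is to sandwich OPT between two bounds: an upper bound on the algorithm's output and a lower bound on OPT, and show their ratio is $O(n)$.

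First I would observe that Algorithm \ref{alg:trivial} outputs exactly $|E(G)|$, since it places all edges in a single color group. So the approximation factor is $|E(G)| / \textrm{OPT}$, and I need to show this is $O(n)$.

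Next I would lower-bound OPT. The cleanest route uses Theorem \ref{theorem:max-deg-lb}: $\textrm{OPT} \geq \lceil \Delta(G)/q \rceil \geq \Delta(G)/q$. Since $q$ is a fixed constant, this gives $\textrm{OPT} = \Omega(\Delta(G))$. Now I combine this with the elementary fact that $|E(G)| \leq \Delta(G) \cdot n / 2$ (sum of degrees is at most $\Delta(G) n$). Therefore
\begin{displaymath}
\frac{|E(G)|}{\textrm{OPT}} \leq \frac{\Delta(G)\, n / 2}{\Delta(G)/q} = \frac{qn}{2} = O(n).
\end{displaymath}

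The argument is essentially immediate once both ingredients are in place, so there is no real obstacle here — the only thing to be careful about is the degenerate case where $\Delta(G) = 0$ (a graph with no edges), in which case $|E(G)| = 0$ and the algorithm is trivially optimal; I would dispose of that case in one line. One could alternatively phrase the lower bound via Theorem \ref{theorem:avg-deg-lb} using the average degree $d(G) = 2|E(G)|/n$, which gives $\textrm{OPT} \geq d^2(G)/(2q^2)$ and hence $|E(G)|/\textrm{OPT} \leq q^2 n / d(G)$; this is also $O(n)$ as long as $d(G) \geq 1$, but the maximum-degree version is slightly cleaner and avoids the case distinction on whether $d(G)$ is bounded away from $0$. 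I would present the maximum-degree argument as the main proof.
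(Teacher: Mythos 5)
Your proof is correct, but it takes a different route from the paper's. You lower-bound OPT via Theorem \ref{theorem:max-deg-lb} (the maximum-degree bound $\textrm{OPT} \geq \Delta(G)/q$) and pair it with $|E(G)| \leq \Delta(G)n/2$, giving the clean ratio $qn/2$ with no side conditions beyond disposing of the edgeless case. The paper instead uses the average-degree bound of Theorem \ref{theorem:avg-deg-lb}: writing $m = |E(G)| = \frac{1}{2}nd(G)$ and $\textrm{OPT} \geq d^2(G)/(2q^2)$, it obtains $m/\textrm{OPT} \leq q^2 n/d(G)$, and then needs $d(G) \geq 1$ (connected, $n \geq 2$) to conclude $O(n)$ --- exactly the case distinction you flagged and avoided. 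Your version is the more elementary and self-contained of the two; what the paper's version buys is a parametrized statement: by restricting to $d(G) \geq n^\alpha$ it shows the ratio is actually $O(n^{1-\alpha})$, i.e.\ the trivial algorithm performs strictly better on denser graphs, and it showcases the average-degree lower bound that is one of the paper's contributions. Both arguments are valid proofs of the stated $O(n)$ bound.
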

\begin{proof}
Algorithm \ref{alg:trivial} achieves objective function value $m = |E(G)| = \frac{1}{2}nd(G)$, where $d(G)$ is the average degree of $G$. By Theorem \ref{theorem:avg-deg-lb} and by making the restriction $d(G) \geq n^\alpha$, we get
\begin{displaymath}
\frac{m}{\textrm{OPT}} \leq \frac{8nd(G)}{2d^2(G)} = \frac{4n}{d(G)} \leq 4n^{(1-\alpha)}.
\end{displaymath}
Choosing $\alpha \geq 0$ yields $d(G) \geq 1$, which is the case for any connected graph with $n \geq 2$ (every vertex has at least one edge incident to it). Thus, the approximation factor is at most $4n$, that is, $O(n)$. \qed
\end{proof}

\section{Approximation Algorithm for Planar Graphs}\label{sect:separator}


In this section we present an algorithm that achieves a sublinear approximation factor for planar graphs. The basic idea of the algorithm comes from the following theorem, proven by Lipton and Tarjan \cite{Lipton1980}.

\begin{theorem}\label{theorem:separator}
Let $G$ be an $n$-vertex planar graph and let $0 \leq \epsilon \leq 1$. Then there is some set $S$ of $O(\sqrt{n / \epsilon})$ vertices whose removal leaves $G$ with no connected component with more than $\epsilon n$ vertices. Furthermore the set $S$ can be found in polynomial time.
\end{theorem}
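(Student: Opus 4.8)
The statement is the classical planar separator theorem of Lipton and Tarjan, so the plan is to reprove it in two stages: first the ``basic'' balanced version --- every $n$-vertex planar graph has a set $S_{0}$ with $|S_{0}| = O(\sqrt{n})$ whose removal leaves every connected component with at most $\frac{2}{3}n$ vertices --- and then bootstrap this to the stated $\epsilon$-version by recursing on the oversized components.

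For the basic version I would argue as follows. We may assume $G$ is connected (otherwise run the argument on any component of size more than $\frac{2}{3}n$; if there is none we are already done) and, by adding edges, that $G$ is a plane triangulation, since a separator of a supergraph on the same vertex set is also a separator of $G$. Root a breadth-first (BFS) spanning tree at an arbitrary vertex, with levels $L_{0}, L_{1}, \dots, L_{t}$, and let $\ell$ be the median level, that is, the least $\ell$ with $\sum_{i \le \ell} |L_{i}| \ge n/2$; then the levels strictly below $\ell$ have fewer than $n/2$ vertices in total, and the levels strictly above $\ell$ have at most $n/2$. By pigeonhole, among any $\lceil \sqrt{n} \rceil$ consecutive levels there is one of size at most $\sqrt{n}$, so we can choose $\ell_{1} \le \ell < \ell_{2}$ with $|L_{\ell_{1}}|, |L_{\ell_{2}}| \le \sqrt{n}$ and $\ell - \ell_{1} \le \sqrt{n}$, $\ell_{2} - \ell \le \sqrt{n}$. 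Deleting $L_{\ell_{1}} \cup L_{\ell_{2}}$ (at most $2\sqrt{n}$ vertices) splits $G$ into the top part (levels $< \ell_{1}$), the bottom part (levels $> \ell_{2}$), and the middle band $B$ (levels $\ell_{1}+1, \dots, \ell_{2}-1$). The top and bottom parts already have at most $n/2 \le \frac{2}{3}n$ vertices each. If $|B| \le \frac{2}{3}n$ we are done with $S_{0} = L_{\ell_{1}} \cup L_{\ell_{2}}$. Otherwise contract the levels $\le \ell_{1}$ to a single root $r$, obtaining a plane graph on $B \cup \{r\}$ equipped with a BFS spanning tree of height at most $\ell_{2} - \ell_{1} - 1 \le 2\sqrt{n}$, and invoke the \emph{fundamental-cycle lemma}: in a plane triangulation together with a rooted spanning tree of height $h$ and nonnegative vertex weights summing to $1$, some non-tree edge closes a cycle --- that edge together with the two tree paths up to the common ancestor --- of length at most $2h+1$ having weight at most $\frac{2}{3}$ strictly inside and at most $\frac{2}{3}$ strictly outside. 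Weighting the vertices of $B$ equally, the resulting cycle $C$ has $O(\sqrt{n})$ vertices and cuts $B$ into inside and outside pieces of at most $\frac{2}{3}|B| \le \frac{2}{3}n$ vertices each; since $L_{\ell_{1}}$ and $L_{\ell_{2}}$ completely insulate $B$ from the rest of $G$, the set $S_{0} = L_{\ell_{1}} \cup L_{\ell_{2}} \cup V(C)$ is a $\frac{2}{3}$-balanced separator with $|S_{0}| = O(\sqrt{n})$.

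The main obstacle is proving the fundamental-cycle lemma, which is where planarity really enters. Each non-tree edge $e$ determines a fundamental cycle $C_{e}$ of length at most $2h+1$, and since $C_{e}$ is a cycle in a plane graph it partitions the remaining vertices into an inside and an outside region. Starting from whichever $C_{e}$ is most unbalanced and repeatedly stepping across a triangular face to the adjacent fundamental cycle changes the inside weight only by the weight of the single face-vertex that is crossed; a discrete intermediate-value argument over the faces of the triangulation then produces a cycle that is $\frac{2}{3}$-balanced. Pinning down the exact balance fraction and handling the edge cases of this face walk is the delicate part, but for our application any fixed fraction and any $O(\sqrt{n})$ size bound are enough, so the constants do not matter.

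Finally, to obtain the $\epsilon$-version, repeatedly pick any current component $C$ with $|C| > \epsilon n$, apply the basic separator to $C$ at a cost of $O(\sqrt{|C|})$ vertices, and recurse; since every split reduces component sizes, all components eventually have at most $\epsilon n$ vertices, and the union of all separators produced is a valid $\epsilon$-separator. To bound its size, charge the $O(\sqrt{|C|})$ cost of splitting $C$ uniformly over the vertices of $C$, i.e. $O(1/\sqrt{|C|})$ per vertex. For a fixed vertex $v$, the successive components containing $v$ that get split shrink by a factor at least $\frac{2}{3}$ each time and all have more than $\epsilon n$ vertices, so the charges to $v$ form a geometric progression whose largest term is $O(1/\sqrt{\epsilon n})$; hence $v$ pays $O(1/\sqrt{\epsilon n})$ in total and the entire separator has size $O(n/\sqrt{\epsilon n}) = O(\sqrt{n/\epsilon})$. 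Polynomial running time is immediate, since the BFS, the face walk, and the recursion (whose depth is $O(\log(1/\epsilon))$) are all polynomial. \qed
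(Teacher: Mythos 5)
First, a note on context: the paper does not prove this statement at all --- it is the classical planar separator theorem, imported verbatim from Lipton and Tarjan \cite{Lipton1980} and used as a black box by Algorithm 3. So there is no in-paper proof to compare against; citing the result is the intended ``proof.'' Your decision to reprove it from scratch is therefore more than the paper asks for, and your overall architecture is the standard one: establish the $\frac{2}{3}$-balanced $O(\sqrt{n})$ separator via BFS levels plus a fundamental-cycle argument, then obtain the $\epsilon$-version by recursively splitting oversized components. Your final charging argument --- assessing each split of a component $C$ at $O(1/\sqrt{|C|})$ per vertex and summing the resulting geometric series down to the threshold $\epsilon n$ to get $O(\sqrt{n/\epsilon})$ total --- is correct and is exactly how Lipton and Tarjan derive this form of the theorem from the basic one.

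The genuine gap is in the fundamental-cycle lemma, which is the only place planarity does real work and hence the heart of the whole theorem. You state it, defer its proof to a ``face walk,'' and assert that stepping from one fundamental cycle to an adjacent one (across a shared triangle) ``changes the inside weight only by the weight of the single face-vertex that is crossed.'' That claim is false in general: if $uvy$ is a triangle and you pass from the fundamental cycle of the non-tree edge $(u,v)$ to that of $(u,y)$, the two enclosed regions can differ by the entire region enclosed by the cycle through $(v,y)$, which may contain many vertices, because the tree paths from $v$ and from $y$ to their common ancestor need not be close to each other. This is precisely why the actual Lipton--Tarjan argument is not a one-vertex-at-a-time intermediate-value walk but a careful extremal choice (a cycle minimizing the outside weight, with ties broken by the number of enclosed faces) followed by a case analysis on whether the edges of the adjacent triangle are tree or non-tree edges. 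There are also smaller untreated points --- re-triangulating after contracting the top levels into the root, and multi-edges created by that contraction --- but those are routine. As written, your proof of the basic separator, and hence of the theorem, is incomplete at its central step; for the purposes of this paper the correct move is simply to cite \cite{Lipton1980}.
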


The so called planar separator $S$ described in the above theorem is particularly useful regarding min-max edge $q$-coloring, since the residue components are balanced. Moreover, $S$ itself does not have too many vertices either. Now we proceed to define the algorithm.

\begin{algorithm}[htb]
\caption{Planar separator 2-coloring algorithm}
\label{alg:separator}
\textbf{Input:} A planar graph $G$ with $n$ vertices
\begin{tabbing}
1. Find a separator $S$ as described in Theorem \ref{theorem:separator}, with $\epsilon = n^{-1/3}$ \\
2. Color edges incident to $S$ with one color \\
3. $m \longleftarrow$ number of edges incident to $S$ \\
3. Remove $S$ from $G$ \\
4. \= For each remaining connected component $S_i$: \\
   \> 5. Color edges incident to $S_i$ with a unique color \\
   \> 6. $m_i \longleftarrow$ number of edges incident to $S_i$ \\
   \> 7. If $m_i > m$, $m \longleftarrow m_i$
\end{tabbing}
\textbf{Output:} $m$
\end{algorithm}

Choosing the order of magnitude of $\epsilon$ is central in obtaining the lowest possible approximation factor, as will become clear in the proof of the following theorem.

\begin{theorem}
The approximation factor of Algorithm \ref{alg:separator} is $O(n^{2/3})$.
\end{theorem}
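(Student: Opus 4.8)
The plan is to bound separately the size of the color group around the separator $S$ and the size of each color group around a residual component $S_i$, and compare both to the two lower bounds we already have available, namely $\textrm{OPT} \geq \Delta(G)/q$ (Theorem~\ref{theorem:max-deg-lb}) and $\textrm{OPT} \geq d^2(G)/(2q^2)$ (Theorem~\ref{theorem:avg-deg-lb}), together with the trivial $\textrm{OPT}\geq 1$. First I would observe that with $\epsilon = n^{-1/3}$ the separator $S$ has $|S| = O(\sqrt{n/\epsilon}) = O(\sqrt{n\cdot n^{1/3}}) = O(n^{2/3})$ vertices, and each residual component has at most $\epsilon n = n^{2/3}$ vertices. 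The output $m$ is the maximum over the color group at $S$ and the color groups at the components $S_i$, so it suffices to bound each of these by $O(n^{2/3})\cdot \textrm{OPT}$.

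For the component color groups: a component $S_i$ has at most $n^{2/3}$ vertices, so the number of edges incident to $S_i$ is at most $n^{2/3}\cdot\Delta(G)$ (each vertex contributes at most $\Delta(G)$ incident edges). Since $\textrm{OPT}\geq \Delta(G)/2$ for $q=2$ by Theorem~\ref{theorem:max-deg-lb}, we get $m_i \leq n^{2/3}\Delta(G) \leq 2n^{2/3}\,\textrm{OPT} = O(n^{2/3})\,\textrm{OPT}$. For the separator color group: $|S| = O(n^{2/3})$ vertices, so again the number of incident edges is $O(n^{2/3}\Delta(G)) = O(n^{2/3})\,\textrm{OPT}$ by the same degree bound. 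Thus in all cases $m = O(n^{2/3})\,\textrm{OPT}$, giving the claimed approximation factor; I should also note the algorithm runs in polynomial time since the separator can be found in polynomial time by Theorem~\ref{theorem:separator} and the remaining steps are linear.

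The main subtlety — and the reason the $\epsilon=n^{-1/3}$ choice is ``central'' — is balancing the two contributions: the separator size grows like $\sqrt{n/\epsilon}$ while the component size grows like $\epsilon n$, so the bound on $m$ is roughly $\max(\sqrt{n/\epsilon},\,\epsilon n)\cdot\Delta(G)$; these two terms are equalized when $\epsilon n = \sqrt{n/\epsilon}$, i.e. $\epsilon^{3} = 1/n$, i.e. $\epsilon = n^{-1/3}$, yielding the common value $n^{2/3}$. I expect the only real work is making this trade-off explicit and confirming that the feasibility of the produced coloring is genuine: edges incident to $S$ get one color, and each vertex outside $S$ lies in exactly one residual component (plus possibly adjacent to $S$), so every vertex sees at most two colors, which is exactly the $q=2$ constraint. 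No step here is a serious obstacle; the argument is a direct combination of the separator theorem with the trivial degree lower bound.
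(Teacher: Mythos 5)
Your proposal is correct and follows essentially the same route as the paper: bound both the separator color group and each component color group by (number of vertices in the set) times $\Delta$, divide by the lower bound $\mathrm{OPT} \geq \Delta/q$ from Theorem~\ref{theorem:max-deg-lb}, and observe that $\epsilon = n^{-1/3}$ equalizes the two resulting terms at $O(n^{2/3})$. Your added remarks on feasibility of the coloring and on polynomial running time are correct and slightly more explicit than the paper's proof, but the core argument is identical.
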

\begin{proof}
Denote the maximum degree of the input graph $G$ by $\Delta$. After the algorithm ends, there are two possibilities. Either one of the colors associated with the separated components or the color associated with the separator has the most vertices. Furthermore, each vertex can be incident to at most $\Delta$ edges. Thus,
\begin{displaymath}
m \leq \max \left( \Delta O(\sqrt{n/\epsilon}), \Delta \epsilon n \right).
\end{displaymath}
Theorem \ref{theorem:max-deg-lb} gives us a lower bound for OPT in terms of $\Delta$. Together with the above, the approximation factor is
\begin{displaymath}
\frac{m}{\textrm{OPT}} \leq \max \left( O(\sqrt{n/\epsilon}), 2\epsilon n \right).
\end{displaymath}
The order of magnitude of the right-hand-side is minimized when it is equal for both terms. This happens when $\epsilon$ is chosen to be $n^{-1/3}$, as is done on the first line of the algorithm. We get
\begin{displaymath}
\frac{m}{\textrm{OPT}} \leq \max \left( O(\sqrt{n^{4/3}}), 2n^{2/3} \right) = O(n^{2/3}).
\end{displaymath}
\qed
\end{proof}

\section{Summary}\label{sect:summary}

The goal of this paper is to analyze the problem of efficiently allocating channels in wireless mesh networks from a theoretic point of view and to design and analyze some basic approximation algorithms. The analysis is simplified by modelling the channel allocation problem as a graph coloring problem, namely min-max edge $q$-coloring. The concept of edge $q$-coloring captures the restriction in some proposed WMN architectures, where each network node can use at most a number of different frequency channels at once. Furthermore, we give the most attention to the case $q=2$, since it has been considered important from a practical perspective.

For the min-max edge $q$-coloring problem, we prove NP-hardness, both in a more general case (see Problem \ref{problem:general-mme}), where each vertex has its individual value for $q$, and in the case where the value of $q \geq 2$ is constant for each vertex.  We show lower bounds for the optimum in terms of maximum and average degree. We also introduce two new algorithms: an approximation algorithm for planar graphs, and an exact polynomial time algorithm for trees. The former is shown to have an approximation factor of $O(n^{2/3})$. 
We also give lower bounds that are close, and often tight, to optimums of three special cases, namely cliques, bicliques and hypercubes.


Interesting directions for future research include finding hardness of approximation results and better algorithms, especially for min-max edge $q$-coloring on general graphs. Also it might be interesting to see how the proposed algorithms would affect performance, if applied to actual Wireless Mesh Networks.

\bibliographystyle{plain}
\bibliography{icalp2013}


\end{document}